\newtheorem{thm}{Theorem}
\newtheorem{defn}[thm]{Definition}
\newtheorem{prob}[thm]{Problem}
\newcommand{\qw}[1][-1]{\ar @{-} [0,#1]}
\newcommand{\qwx}[1][-1]{\ar @{-} [#1,0]}
\newcommand{\cw}[1][-1]{\ar @{=} [0,#1]}
\newcommand{\cwx}[1][-1]{\ar @{=} [#1,0]}
\newcommand{\gate}[1]{*+<.6em>{#1} \POS ="i","i"+UR;"i"+UL **\dir{-};"i"+DL **\dir{-};"i"+DR **\dir{-};"i"+UR **\dir{-},"i" \qw}
\newcommand{\meter}{*=<1.8em,1.4em>{\xy ="j","j"-<.778em,.322em>;{"j"+<.778em,-.322em> \ellipse ur,_{}},"j"-<0em,.4em>;p+<.5em,.9em> **\dir{-},"j"+<2.2em,2.2em>*{},"j"-<2.2em,2.2em>*{} \endxy} \POS ="i","i"+UR;"i"+UL **\dir{-};"i"+DL **\dir{-};"i"+DR **\dir{-};"i"+UR **\dir{-},"i" \qw}
\newcommand{\control}{*!<0em,.025em>-=-<.2em>{\bullet}}
\newcommand{\ctrl}[1]{\control \qwx[#1] \qw}
\newcommand{\targ}{*+<.02em,.02em>{\xy ="i","i"-<.39em,0em>;"i"+<.39em,0em> **\dir{-}, "i"-<0em,.39em>;"i"+<0em,.39em> **\dir{-},"i"*\xycircle<.4em>{} \endxy} \qw}
\newcommand{\multigate}[2]{*+<1em,.9em>{\hphantom{#2}} \POS [0,0]="i",[0,0].[#1,0]="e",!C *{#2},"e"+UR;"e"+UL **\dir{-};"e"+DL **\dir{-};"e"+DR **\dir{-};"e"+UR **\dir{-},"i" \qw}
\newcommand{\ghost}[1]{*+<1em,.9em>{\hphantom{#1}} \qw}
\newcommand{\gategroup}[6]{\POS"#1,#2"."#3,#2"."#1,#4"."#3,#4"!C*+<#5>\frm{#6}}
\newcommand{\rstick}[1]{*!L!<-.5em,0em>=<0em>{#1}}
\newcommand{\lstick}[1]{*!R!<.5em,0em>=<0em>{#1}}
\newcommand{\Qcircuit}{\xymatrix @*=<0em>}
\newacronym{bsm}{BSM}{Bell-State Measurement}
\newacronym{css}{CSS}{Calderbank-Shor-Steane}
\edef\tmp{\the\baselineskip}
  \providecommand\BibTeX{{%
    \normalfont B\kern-0.5em{\scshape i\kern-0.25em b}\kern-0.8em\TeX}}}
\begin{document}

\title{Repeated Purification versus Concatenated Error Correction in Fault Tolerant Quantum Networks}

\author{Michel Barbeau}
\authornotemark[1]
\email{barbeau@scs.carleron.ca}
\author{Joaquin Garcia-Alfaro}
\authornotemark[2]
\email{garcia\_a@telecom-sudparis.eu}
\author{Evangelos Kranakis}
\authornotemark[1]
\email{kranakis@scs.carleron.ca}
\affiliation{%
  \institution{\\ $^*$School of Computer Science, Carleton University}
  \streetaddress{1125 Col. By Dr.}
  \city{Ottawa}
  \state{Ontario}
  \country{Canada}
  \postcode{K1S 5B6}
}
\affiliation{%
  \institution{$^\dagger$SAMOVAR, T\'el\'ecom SudParis, Institut Polytechnique de Paris}
  \streetaddress{19 place Marguerite Perey}
  \city{Palaiseau}
  \country{France}
  \postcode{91120}
}

\renewcommand{\shortauthors}{Barbeau, Garcia-Alfaro, Kranakis}
\renewcommand{\shorttitle}{Quantum Networks and Work Memory Requirements}

\begin{abstract}
Entanglement distribution is a core mechanism for the future quantum Internet.
The quantum world is, however, a faulty environment.
Hence, successful entanglement swapping is error-prone.
The occurrence of quantum state errors can be mitigated using purification and error correction,
which can be repeated in the former case and concatenated in the latter case.
Repeated purification merges low-fidelity qubits into higher-quality ones, 
while concatenated error correction builds upon the redundancy of quantum information.
In this article, we study in-depth and compare the two options: repeated purification and concatenated error correction.
We consider using repeated purification and concatenated error correction to mitigate the presence of faults that occur during the establishment of Bell pairs between remote network nodes.
We compare their performance versus the number of repetitions or concatenations, 
to reach a certain level of fidelity in quantum networks.
We study their resource requirements, namely, their work memory complexity (e.g., number of stored qubits) and operational complexity (e.g., number of operations).
Our analysis demonstrates that concatenated error correction, versus repeated purification, 
requires fewer iterations and has lower operational complexity than repeated purification to reach high fidelity at the expense of increased memory requirements. 
\end{abstract}

\keywords{Quantum Network, Quantum Communication, Quantum Repeater, Entanglement Swapping, Entanglement Distribution, Repeated Purification, Concatenated Error Correction.}

\maketitle

\section{Introduction}

The promise of quantum computing raises doubts and questions  
in some~\cite{Dyakonov2019} and provokes enthusiasm in others~\cite{Zhong1460}. 
An essential component of the future quantum world is the quantum Internet, 
which will transmit quantum data through a network~\cite{gyongyosi2022advances}.
A core function required by the quantum Internet to achieve this task is entanglement distribution,
which aims at creating Bell pairs between distant nodes.
These Bell pairs can then transfer quantum data, using teleportation, or classical data, using super-dense coding from one node to the other.
However, it is widely acknowledged that quantum computing and networking are faulty environments with a high likelihood of errors in quantum states.
The rate of errors can be reduced using purification and error correction.
The quality of a quantum state can be improved by repeating the former multiple times or concatenating several instances of the latter.
Both purification and error correction deserve to be considered to handle the occurrence of quantum state faults. 
Repeated purification enhances the quality of a quantum state by merging low-fidelity qubits into higher-quality ones. 
For concatenated error correction, the quality enhancement procedure leverages redundant quantum information. 
Recent research shows the relevance of quantum error correction to make advances in the quantum Internet~\cite{gyongyosi2022advances,biercuk2022quantum}. 
Software tools that benefit from such techniques include security applications (e.g., enforcement of quantum key establishment~\cite{scarani2009security}) and distributed applications (e.g., augmenting the parallelism of quantum machine learning~\cite{biamonte2017quantum}).

An interesting question is which, among repeated purification and concatenated error correction, is best to use in the quantum Internet.
In this article, we analyze this question in depth.
For each case, namely, repeated purification and concatenated error correction, we develop analytic models of fidelity, 
quantum memory complexity (number of required qubits), and time complexity (number of operations).
We quantify the requirements (memory and operations) to improve the quality of a quantum state using repeated purification and concatenated error correction. The former includes work memory (e.g., number of qubits stored), while the latter refers to the process's fidelity improvement operational complexity (e.g., number of steps required).
Our analysis demonstrates that reaching high fidelity via concatenated error correction requires fewer iterations than repeated purification at the expense of increased memory requirements. 
At the same time, repeated purification increases the operational complexity in contrast with concatenated error correction. 
Numeric simulations complement the results obtained with the analytic models.

The article is structured as follows. Section~\ref{sec:related-work} surveys related work. 
Section~\ref{sec:netmodel} presents our model for quantum networks and repeaters. 
Section~\ref{sec:entanglement-swapping} introduces entanglement swapping and discusses our error model and repeated purification in conjunction with concatenated error correction for fidelity enhancement.  
Section~\ref{sec:efficiency-analysis} provides our analysis of the efficiency of each approach (i.e., repeated purification vs. concatenated error correction).  
Section~\ref{sec:generalgraphs} presents the results of numeric simulations. 
Finally, Section~\ref{sec:conc} summarizes our main work and discusses prospects for future research.

\section{Related Work}
\label{sec:related-work}

Metropolitan-scale applications using quantum networks have been recently analyzed via simulation~\cite{yehia2022quantum,coopmans2021netsquid}. Results show 
that practical quantum-enhanced network functionalities may soon be ready for security applications (e.g., enforcement of quantum key establishment~\cite{scarani2009security}) and distributed applications (e.g., augmenting the parallelism of quantum machine learning~\cite{biamonte2017quantum}). Nevertheless, quantum-enhanced networks must face the problem of entanglement distribution under fidelity constraints~\cite{stephens2013hybrid}. Given a source and a destination, the length of the channel imposes a significant decay in the quality of communication. Several protocols have been proposed to address this problem~\cite{Pirandola2019,Bauml2018,Gyongyosi2018,Pant2019}. The first step relies on finding a path of quantum repeaters capable of conducting end-to-end entanglement distribution via entanglement swapping~\cite{Zukowski1993,Caleffi2017,perseguers2010,Schoute2018}. Such repeaters are assumed to be imperfect~\cite{Briegel1998}, i.e., the execution of Bell pair creation and entanglement swapping may fail. Hence, entanglement distribution is traditionally modeled as a probabilistic process~\cite{9351761}. To maximize the likelihood of success, the process can be scheduled, e.g., following a sequential or nested model~\cite{dai2020optimal}.

This article assumes an entanglement swapping model with Bell pairs in homogeneous repeater chains~\cite{dur2007entanglement}. Decay of communication is considered under the bit-flip error 
model (i.e., Pauli $X$ errors). We consider using quantum memories that can store qubits for short periods of time~\cite{Azuma2015,Munro2015,liu2017semihierarchical} to handle failures in the chain of homogeneous repeaters.  Resource requirements, including work memory (e.g., number of stored qubits) and operational complexity (e.g., number of transformations), to attain a certain level of fidelity, are investigated. The amount of quantum memory required by repeaters performing entanglement distribution can be quantified to achieve a target fidelity~\cite{barbeau2020capacity}. There are works with a focus on linear~\cite{Jiang2007,Shchukin2019,Caleffi2017}, grid~\cite{Pant2019}, ring~\cite{Schoute2018,Shirichian2018} or  sphere~\cite{Schoute2018} topologies, in which the size of the required quantum memory is related to the number of neighbors that a repeater has. Under this classical error model, repeated purification~\cite{zhou2016purification} vs. concatenated repetition codes~\cite{fan2022entanglement} can be compared~\cite{dur2007entanglement} to decide which one achieves better performance with minimal increase in resources. Our analysis demonstrates that concatenated error correction requires fewer iterations and operations than repeated purification at the expense of increased memory requirements. 

\section{Quantum Networking}\label{sec:netmodel}

We begin with a description of our quantum network model and architecture. As network model~\cite{peterson2007computer},
let us consider a connected graph denoted as the pair $G= (V, E)$, where $V$ is the vertex set and $E$ the edge set. Assume that the set $V$ of vertices is partitioned into two distinct subsets $R$ and $T$, such that $V= R \cup T$, where
$R$ is the set of {\em repeaters} and
$T$ is the set of {\em terminals}.
A network example is depicted in Fig.~\ref{fig:qua1}.
An edge represents a bi-directional quantum channel that can be used to establish entanglement between its two endpoints.
All repeaters and terminals can perform entanglement swapping, but have limited quantum memory. Let us consider a path $p= v_0, v_1 , \ldots , v_{n-1}, v_n$ of $n+1$ vertices in $V$.
The start and final vertices $v_0$ and $v_n$ are terminals in $T$.
All intermediate vertices $v_i \in R$, $0 < i < n$, are repeaters in $R$. The number $n$ of edges of $p$ is its length and is denoted by $\vert p \vert$.

\begin{figure}[!hptb]
	\begin{center}
		\includegraphics[width=0.5\columnwidth]{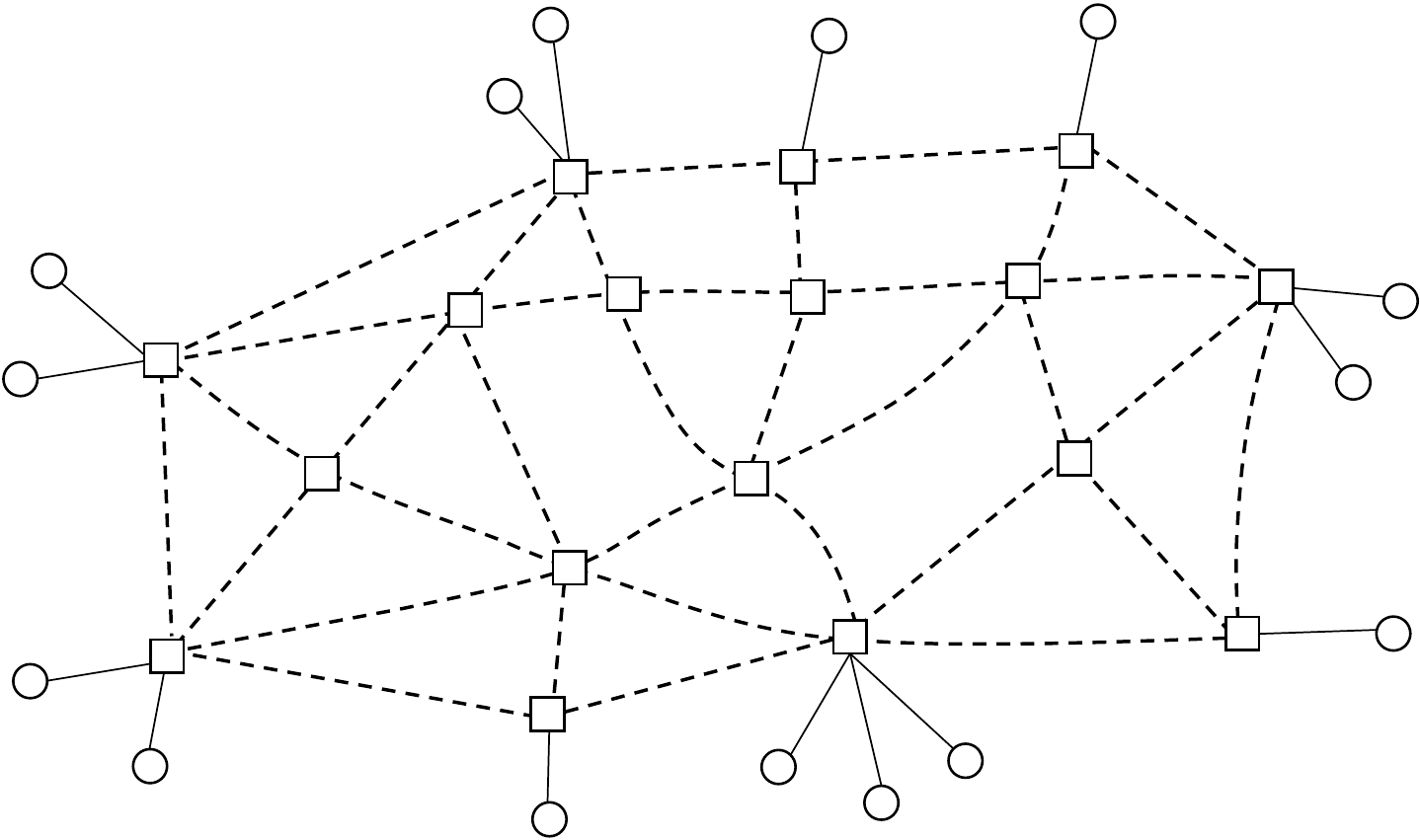}
	\end{center}
	\caption{Vertices depicted by disks are terminals in $T$ while those depicted by squares are repeaters in $R$. Note that terminals are directly connected to repeaters. A dashed line represents a path consisting of repeaters; the endpoints of the path connect terminals.\label{fig:qua1}}	
\end{figure}

\begin{defn}[Complete Set of Paths]
	A set $P$ of paths in the given graph $G$ is called {\em complete} when for any pair $t , t'$ of terminals in $T$ there is a path $p= v_0, v_1 , \ldots , v_{n-1}, v_n$ in $P$ such that $t= v_0, t'= v_n$ and all intermediate vertices $v_i$, for $0 < i < n$, are repeaters in $R$.
\end{defn}

\noindent We require that: 
i)~terminal nodes are not adjacent to each other,
ii)~every terminal node is adjacent to at least one repeater, and
iii)~adjacent repeaters can communicate directly with each other.

\begin{figure}[!hptb]
	\begin{center}
		\includegraphics[width=0.5\columnwidth]{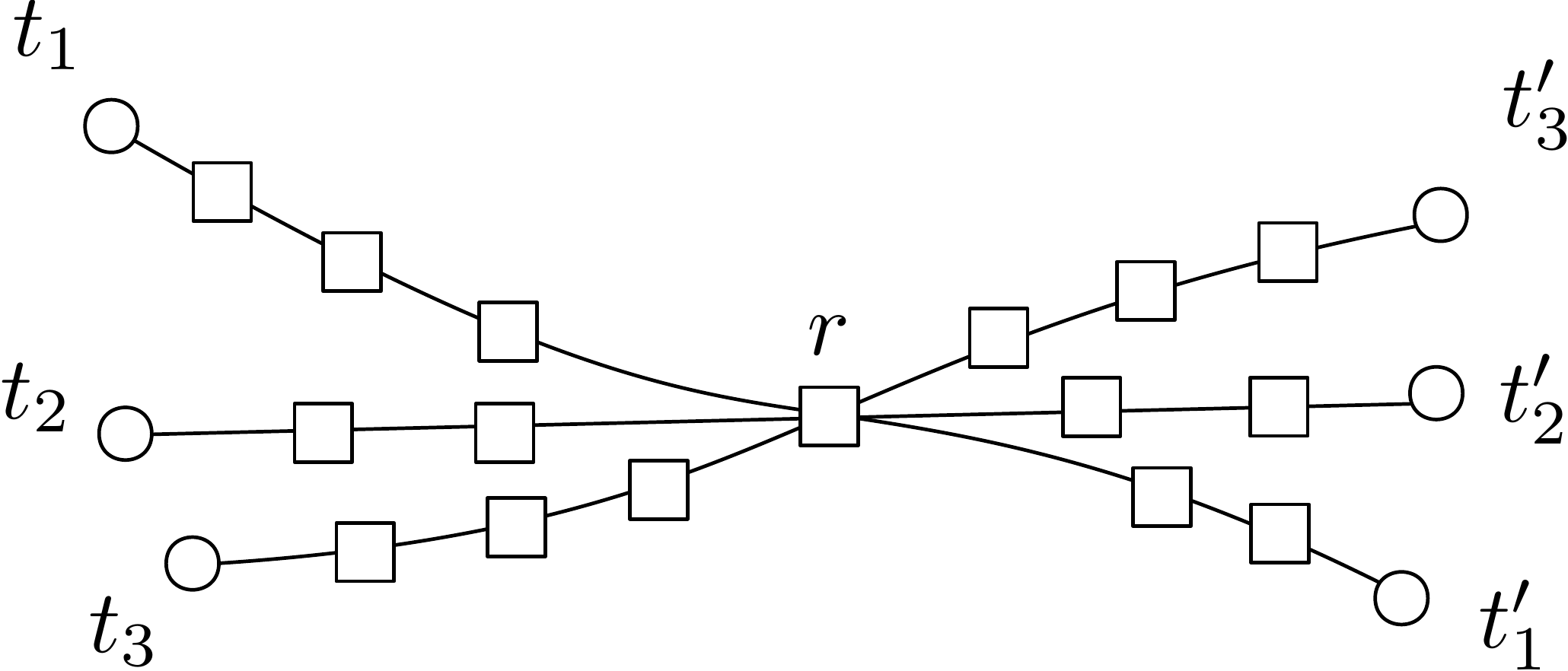}
	\end{center}
	\caption{The capacity of a repeater $r$ is the number of paths that go through $r$; in the picture, this is equal to three. Note that the hollow squares depict repeaters (from the set $R$) and hollow disks terminals (from the set $T$).}
	\label{fig:graph1}
\end{figure}
\begin{defn}[Capacity of a Repeater]\label{def:capacity}
	For a given complete set $P$ of paths in a graph $G$ and a repeater $r \in R$, 
	let the capacity of $r$, denoted by $C_P (r)$, be defined as the number of paths $p\in P$ that pass through $r$, see Fig.~\ref{fig:graph1}.
\end{defn}
According to this definition, the capacity of a repeater is proportional to the number of qubits it must be able to store, paired with entanglement, to enable communication between terminals. 

\begin{defn}[Capacity Induced by a Complete Set of Paths] For a given complete set $P$ of paths in a graph $G$, the capacity of $G$ induced by the complete set $P$ of paths is defined as the maximum capacity caused by repeaters in $R$. It is defined by the formula
	\begin{equation}
	\label{congest0:eq}
	C_G (P) := \max_{r \in R} C_P ( r ) \mbox{ qubits.} 
	\end{equation}
\end{defn}

Let ${\mathcal P}_G$ denote the set of complete sets of paths for the graph $G$. 
When this is understood from the context, we omit the subscript $G$ in ${\mathcal P}_G$. Among the collections of a complete set of paths 
for the graph $G$, we are interested in minimizing the quantity $C_G (P)$, namely 
\begin{equation}
\label{congest1:eq}
\min_{P \in {\mathcal P}_G} C_G (P) = \min_{P \in {\mathcal P}_G} \max_{r \in R} C_P ( r ) \mbox{ qubits},
\end{equation}
where the minimum is taken over the set ${\mathcal P}_G$ of all possible complete sets of paths $P$ for the graph $G$. 

\begin{figure}[!b]
	\begin{center}
		\includegraphics[width=0.65\columnwidth]{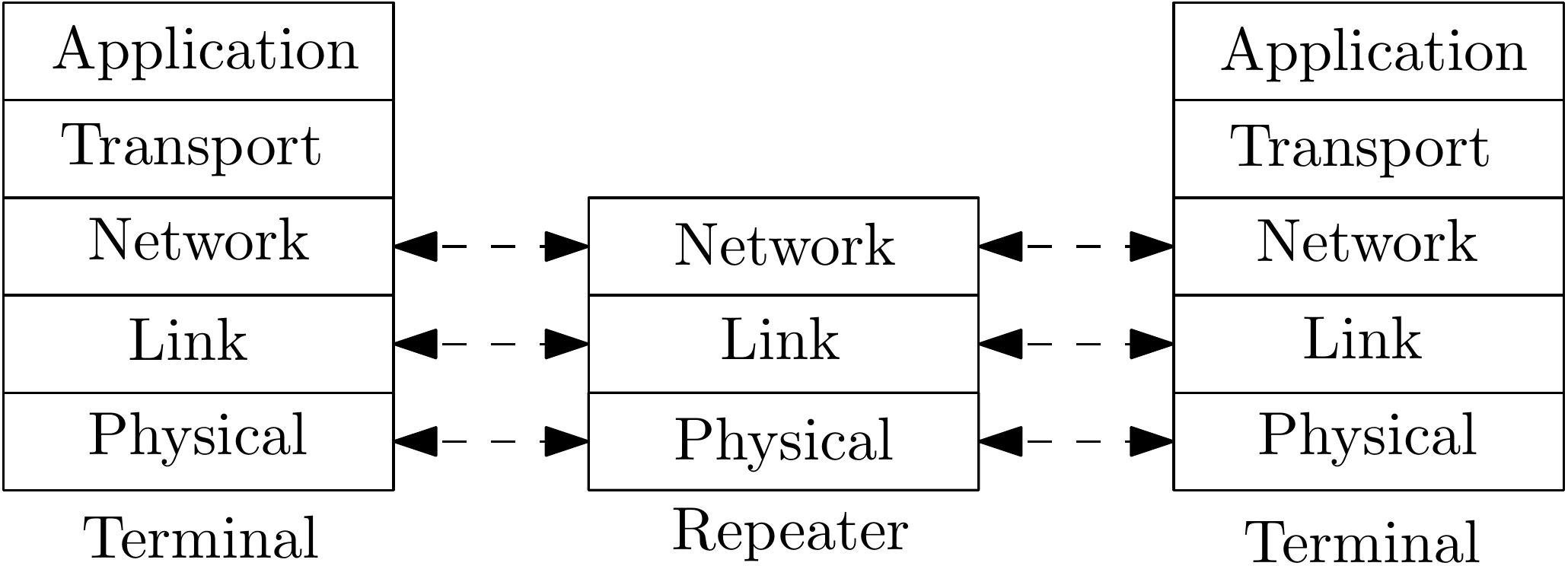}
	\end{center}
	\caption{Quantum network layered architecture.} 
	\label{fig:networkarchitecture}
\end{figure}

The main problem concerns the capacity induced by connecting all pairs of terminals by paths consisting of repeaters. We are aiming for an algorithm that defines the set $P$ of paths
while minimizing the resulting capacity induced on the graph. The main problem is formally described as follows.
\begin{prob}
	Given a graph $G$, find a complete set  $P$ of paths that attains or approximates the quantity $\min_{P \in {\mathcal P}_G} C_G (P) $.
\end{prob}
This capacity is 
somewhat related to the number of qubits the repeaters would need to store to allow communication between terminals using qubits.
In the sequel, we explore these various aspects of quantum network architecture.

Quantum networks can be visualized by employing layering architectures~\cite{matsuo2019quantum}. 
Figure~\ref{fig:networkarchitecture} depicts 
the layers assumed in this work.
Each network node (repeaters and terminals) comprises a physical, a link, and a network layer. The physical layer assumes error-prone point-to-point transfer of quantum bits, using single photons or laser pulses. It also assumes the generation of low-fidelity entangled pairs using, e.g., entangled photon pair source devices~\cite{jones2016design}. Such devices can be configured to attempt the creation of Bell pairs continuously. When Bell pair creation is successful, the resulting two qubits are physically separated. For instance, the first qubit can go to the left node and the second to the right node. The physical transmission of entangled pairs is also prone to errors.  Non-perfect entanglement (fidelity below one) assumes Bell pair creation, but potentially contains errors. Fidelity improvement methods can be conducted at the upper layers (at the link and network layers). These two upper layers implement purification, error correction, and entanglement swapping. There is a transport layer and an application layer in all the terminals. The transport layer uses the network layer to provide end-to-end transfer of quantum states to participating processes. The transport layer may also implement end-to-end error correction. The application layer comprises processes running quantum algorithms.

\section{Fault Tolerant Quantum Networking}
\label{sec:entanglement-swapping}

The quantum Internet will rely on establishing quality Bell pairs between network nodes.
Bell pairs can transfer quantum states using teleportation, or classical data, using super-dense coding.
For two neighbor network nodes connected by a direct link, Bell pairs can be established by leveraging parametric down-conversion.
Bell pairs can be established with entanglement swapping for remote network nodes, not directly linked.
Nevertheless, both parametric down-conversion and entanglement swapping may be faulty and cause errors. 
In the sequel, we assume that adjacent nodes, repeaters, or terminals,
use direct communications to establish Bell pairs.
Quantum repeaters and entanglement swapping establish Bell pairs between non-adjacent repeaters and terminals.
Bell pair establishment procedures can be faulty and introduce errors.
Fault tolerance can be achieved using purification and error correction.
The usage of entanglement swapping, repeated purification, and concatenated error correction is discussed further.

Quantum errors resulting from faulty procedures
are modeled in several ways.
For the sake of simplicity and without loss of generality,
let us consider a bit-flip model.
Randomly, qubit $\ket{0}$ is converted to $\ket{1}$, or vice versa. 
An error changes the qubit $\alpha\ket{0} + \beta\ket{1}$ to the qubit $\beta\ket{0} + \alpha\ket{1}$.
For a Bell pair such as
\begin{equation}
\label{eq:purification:bellpair}
\ket{\Phi^+}  =
\frac{\Ket{00} + \Ket{11}}{\sqrt{2}}
\end{equation}
when both qubits are inverted, 
the errors are canceled, the term $\Ket{00}$ becomes $\Ket{11}$ and vice versa.
However, the presence of a single qubit error results in the quantum state
\begin{equation}
\label{eq:purification:bellpairwitherror}
\ket{\Psi^+} = \frac{\Ket{01} + \Ket{10}}{\sqrt{2}}.
\end{equation}
In this example, the first or second qubit flips, but not both.
However, in both cases, the outcome is the same.
When the qubit in the first position is flipped,
the term $\Ket{00}$ is transformed to $\Ket{10}$ and
the term $\Ket{11}$ is transformed $\Ket{01}$. 
When the qubit in the second position is flipped,
the term $\Ket{00}$ is transformed to $\Ket{01}$, and
the term $\Ket{11}$ is transformed to $\Ket{10}$.
The resulting quantum state is the same for both events, Eq.~\eqref{eq:purification:bellpairwitherror}.

Let $p\in [0,1]$ be the probability of a single qubit inversion error in a Bell pair.
The error model is represented as the quantum state
\begin{equation}
\label{eq:purification:errormodel}
\sqrt{1-p} \ket{\Phi^+}
+
\sqrt{p} \ket{\Psi^+}.
\end{equation}

Entanglement swapping is a core quantum networking procedure.
In an instance of the procedure,
three network nodes are involved: a source $s$, a repeater $r$, and a destination $d$.
The source and destination can be repeaters or terminals.
In the sequel, possession subscripts are used.
The subscript $s$ in the ket-notation $\ket{\phi}_s$ means that the qubit $\ket{\phi}$ is possessed by nodes $s$.
The subscript $sd$ in the ket-notation $\ket{\Phi^+}_{s d}$ means that the Bell pair $\ket{\Phi^+}$ is shared between nodes $s$ and $d$. 
Node $s$ possesses the first qubit, while node $d$ possesses the second qubit.
Entanglement swapping assumes that $s, r$ and $r,d$ 
share the Bell pairs $\ket{\Phi^+}_{s r}$ and $\ket{\Phi^+}_{r d}$, respectively.
They may have been created using parametric down-conversion or previous instances of entanglement swapping.
Next, the repeater $r$ does Bell measurement of the second qubit of $\ket{\Phi^+}_{s r}$ and the first qubit of $\ket{\Phi^+}_{r d}$ into the classical bits $c_1$ and $c_2$, respectively.
The repeater sends the two classical bits $c_1$ and $c_2$ resulting from the measurement to the destination $d$.
If $c_2$ is equal to one, then $d$ applies the Pauli gate $X$ to the second qubit of $\ket{\Phi^+}_{r d}$.
If $c_1$ is equal to one, then $d$ also applies to gate $Z$.
The final result is an end-to-end Bell pair $\ket{\Phi^+}_{s d}$ shared between the source and
destination, 
the first qubit of $\ket{\Phi^+}_{s r}$ and second qubit of $\ket{\Phi^+}_{r d}$, 
possibly transformed by the $X$ and $Z$ gates.

\begin{figure*}[!htb]
	    \includegraphics[width=11cm]{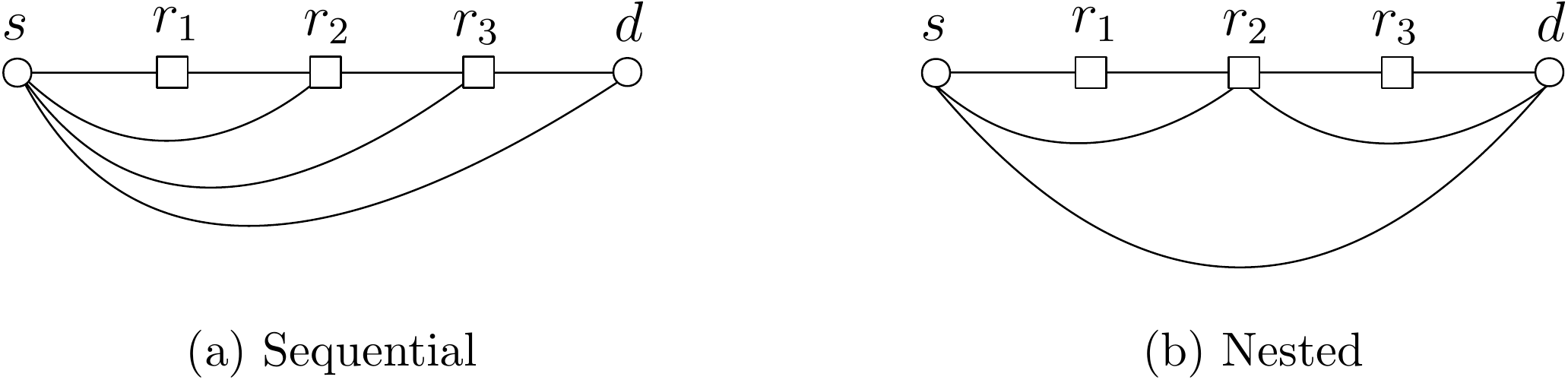}
	\caption{Scheduled swapping, (a) sequential and (b) nested.}
	\label{fig:swapping}
\end{figure*}
Entanglement swapping can be multi-hop.
In such a scenario,
several swapping operations are used to establish entanglement between two distant terminals, $s$ and $d$,
connected by a multi-hop path.
Using a routing algorithm,  
a path $p= r_0=s, r_1 , \ldots , r_{n-1}, r_n=d$ is chosen,
where $r_1 , \ldots , r_{n-1}$ are repeaters, 
with $r_{0}$ equal to $s$ and $r_{n}$ equal to $d$.
Entanglement is established stage by stage.
There are two available scheduled swapping protocols~\cite{dai2020optimal,9351761}, namely, sequential and nested, 
see Figure~\ref{fig:swapping}.
For the sequential protocol, for $i=1,2,3,\ldots,n-1$,
using repeater $r_{i}$ as intermediate, an entanglement swapping operation creates a Bell pair $\ket{\Phi^+}_{s r_{i+1}}$ between nodes
$s$ and $r_{i+1}$.
In $n-1$ iterations, 
a Bell pair $\ket{\Phi^+}_{s d}$ is  created between 
terminals $s$ and $d$.

For the sake of simplicity, let us assume that the path length $n$ is a power of two.
With the nested protocol, firstly, for $i=0,2,4,\ldots,n-2$,
using repeater $r_{i+1}$ as intermediate, an entanglement swapping operation creates a Bell pair $\ket{\Phi^+}_{r_{i} r_{i+2}}$ between nodes
$r_{i}$ and $r_{i+2}$.
Next, for $i=0,4,8,\ldots,n-4$,
using node repeater $r_{i+2}$ as intermediate, 
entanglement swapping creates a Bell pair $\ket{\Phi^+}_{r_{i} r_{i+4}}$ between nodes $r_{i}$ and $r_{i+4}$.
Each iteration doubles the length of the segment bridged by a pair.
In $\log_2 n$ iterations, 
a Bell pair $\ket{\Phi^+}_{s d}$ is  created between 
terminals $s$ and $d$.
Note that in both protocols, sequential and nested, 
every repeater needs to be able to store two qubits.

\begin{figure*}[!htb]
    \includegraphics[width=5cm]{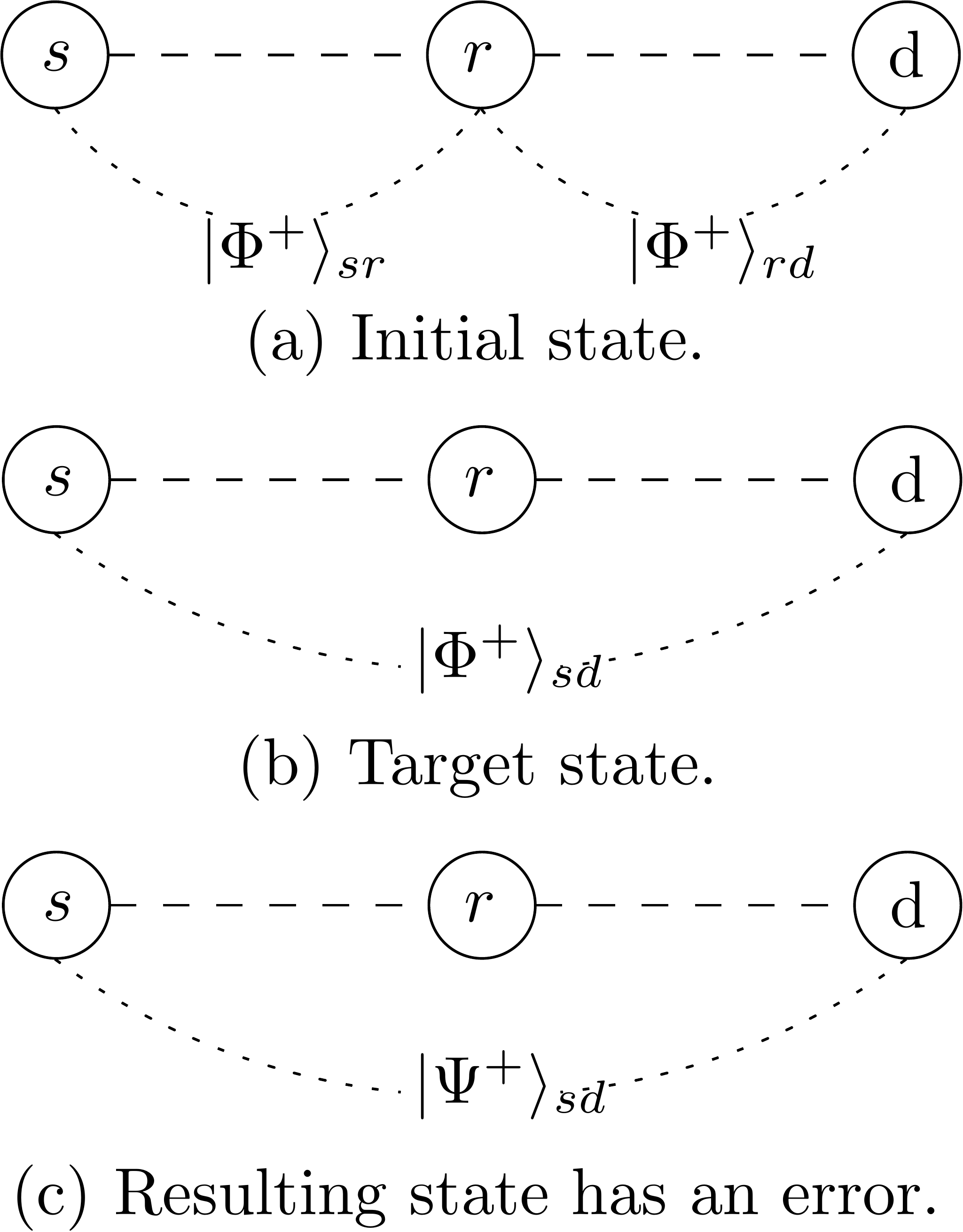}
	\caption{Entanglement swapping in the presence of a bit-flip error. (a)~Initial state. (b)~Target outcome. (c)~Faulty outcome.}
	\label{fig:swapping-n-error}
\end{figure*}
Errors may result from faulty entanglement swapping.
Figure~\ref{fig:swapping-n-error} (a) pictures the initial state of an instance of the entanglement swapping procedure.
There is a chain of two point-to-point links connecting terminal $s$ to repeater $r$ and repeater $r$ to terminal $d$.
Terminal $s$ shares a Bell pair $\ket{\Phi^+}_{s r}$ with repeater $r$.
Repeater $r$ shares a Bell pair $\ket{\Phi^+}_{r s}$ with terminal $s$.
Part (b) shows the target state resulting from a successful entanglement swapping. 
Terminal $s$ shares a Bell pair $\ket{\Phi^+}_{s d}$ with terminal $d$.
Assuming the bit flip error model, Part~(c) shows an entanglement swapping that failed to produce a correct result.
Terminal $s$ shares a Bell pair $\ket{\Psi^+}$ with terminal $d$.
Qubit errors are an important quantum networking problem.
However, the quality of a Bell pair can be improved with repeated purification and  concatenated error correction,
discussed in the sequel.

Purification is a procedure executed between two nodes, 
a source terminal $s$ and a destination terminal $d$, 
to augment the fidelity of Bell pairs~\cite{van2014quantum}. 
The outcome of purification can be characterized by the probability to correct errors successfully, 
referring to the concept of fidelity.
Fidelity indicates the degree of resemblance of a quantum state to its original value.
Fidelity is affected by errors but can be improved using purification.
The goal is to establish high-fidelity Bell pairs between quantum network nodes.

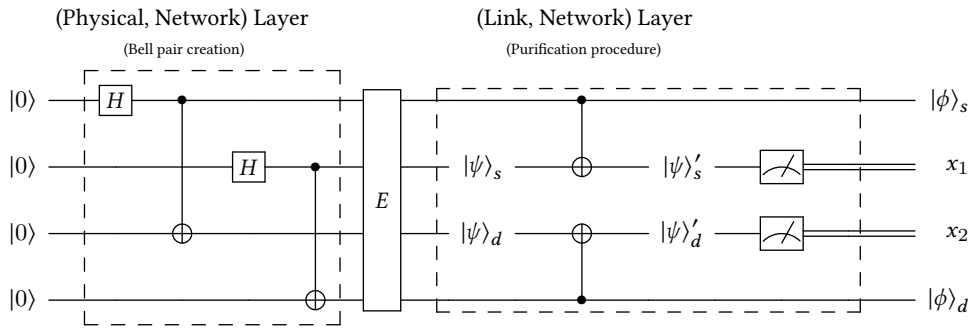
\begin{figure}[!b]
	\begin{center}
		\[
    \Qcircuit @C=1em @R=1em @! {	  
    & &  \mbox{\begin{tabular}{cc} (Physical, Network) Layer  \\ \scriptsize{~~~~~~~~~~(Bell pair creation)} \end{tabular}} & & & & & & 
    \mbox{\begin{tabular}{cc} ~~~~~~~~~(Link, Network) Layer\\\scriptsize{~~~~~~~~~(Purification procedure)}\end{tabular}} \\
    \lstick{\ket{0}} & \gate{H} & \ctrl{2} & \qw & \qw & \multigate{3}{E}       & \qw & \qw                   & \ctrl{1} & \qw    & \qw                    & \qw      & \qw    &  \qw & \lstick{\ket{\phi}_s} \\
    \lstick{\ket{0}} & \qw      & \qw    & \gate{H}  & \ctrl{2} & \ghost{E}        & \qw & \lstick{\ket{\psi}_s}     & \targ      & \qw & \lstick{\ket{\psi}'_s}  & \meter & \cw & \cw & \lstick{x_1} \\
    \lstick{\ket{0}} & \qw & \targ & \qw & \qw & \ghost{E}                    & \qw & \lstick{\ket{\psi}_d}  & \targ    & \qw & \lstick{\ket{\psi}'_d}   & \meter  &  \cw & \cw & \lstick{x_2} \\
    \lstick{\ket{0}} & \qw      & \qw    & \qw & \targ & \ghost{E}                   & \qw &  \qw  & \ctrl{-1}   & \qw  & \qw &  \qw &  \qw & \qw & \lstick{\ket{\phi}_d} 
  \gategroup{2}{2}{5}{5}{1.25em}{--}
  \gategroup{2}{7}{5}{13}{1em}{--}
    }
    \]
	\caption{Purification procedure. According to the bit-flip error model, gate $E$ represents the arbitrary corruption of qubits. As depicted, Bell pair creation may happen at the physical layer, followed by the link layer; it may also happen at the network layer.}\label{fig:purificaton}
	\end{center}
\end{figure}

Figure~\ref{fig:purificaton} depicts purification.
The goal is to establish the high-fidelity Bell pair $\ket{\Phi^+}_{s d}$. 
The leftmost dotted-line rectangle represents the attempt by a repeater $r$ to create two Bell pairs
shared with nodes $s$ and $d$.
During Bell pair establishment, 
qubit errors can be introduced.
Rectangle $E$ models the introduction of errors.
For the sake of simplicity, let us consider solely bit-flip errors.
The action of $E$ on every one of the four qubits is defined as the following weighted sum:
\begin{equation}
\label{eq:bitflipmodel}
\sigma = \sqrt{p_I} \cdot I + \sqrt{p_X} \cdot X, \mbox{ with identity and Pauli matrices $I$ and $X$, and probabilities } 
p_I + p_X = 1
\end{equation}
The term $p_X$ is a bit-flip error probability. 
Gate $E$ is the tensor product of four such gates, that is, $E= \sigma^{\otimes 4}$.
In the output of $E$,
let us denote the first pair's first qubit as $\ket{\phi}_s$ while the first qubit of the second pair is $\ket{\psi}_s$, 
both possessed by terminal $s$.
Using $\ket{\phi}_s$ as the control qubit and $\ket{\psi}_s$ the target qubit,
node $s$ applies a $CNOT$ gate yielding the pair
$\ket{\phi}_s\ket{\psi'}_s = CNOT \left( \ket{\phi}_s\ket{\psi}_s \right)$.

Let us denote the first pair's second qubit as $\ket{\phi}_d$ while the second qubit of the second pair is $\ket{\psi}_d$, 
both possessed by terminal $d$.
Using $\ket{\phi}_d$ as the control qubit and $\ket{\phi}_d$ as the target qubit,
node $d$ applies a $CNOT$ gate yielding the pair
$\ket{\phi}_d\ket{\psi}'_d = CNOT \left( \ket{\phi}_d\ket{\psi}_d \right)$.
Terminal $s$ measures $\ket{\psi}'_s$ into classical bit $x_1$.
Terminal $d$ measures $\ket{\psi}'_d$ into classical bit $x_2$.
Using classical communications, 
$s$ and $d$ compare the values of $x_1$ and $x_2$.
When they are equal, it is concluded that
the pair $\ket{\phi}_s \ket{\phi}_d$
has been purified and corresponds to the Bell pair $\ket{\Phi^+}_{sd}$.
When $x_1$ and $x_2$ are different, it is concluded that the pair $\ket{\phi}_s\ket{\phi}_d$, is not equal to $\ket{\Phi^+}_{sd}$. 
Purification failed. 
The pair $\ket{\phi}_s\ket{\phi}_d$, is rejected.

There are four possible purification outcomes.
When $x_1$ is equal to $x_2$, there are two cases.
There are no errors, and both pairs $\ket{\phi}_s\ket{\phi}_d$ or the pair $\ket{\psi}_s\ket{\psi}_d$ at the output of gate $E$ are in state $\ket{\Phi^+}_{sd}$.
Or, both pairs are in error in state $\ket{\Psi^+}_{sd}$.
The probability of the first case is $q^2$, with $q=1-p$. 
Purification is successful.
The probability of the second case is $p^2$.
The errors are undetected and purification fails
and wrongly concludes with success.
When $x_1$ and $x_2$ are different, 
there is a qubit error in one of the pairs.
Either the pair $\ket{\phi}_s\ket{\phi}_d$ or the pair $\ket{\psi}_s\ket{\psi}_d$ is in state $\ket{\Psi^+}$,
but not both.
The probability for the first or second pair to be in error is $qp$.
In both cases, the error is detected.
Purification fails.

In this setting, as a function of the single qubit inversion error probability $p$ in a Bell pair,
fidelity becomes equivalent to the likelihood of the absence
of errors when purification concludes with a positive result, that is:
\begin{equation}
\label{eq:purification:fidelity}
f(p) = \frac{q^2}{q^2 + p^2} \mbox{ with $q=1-p$} .
\end{equation}

An important observation is that purification does not improve the  fidelity, i.e., the condition $f(p) > q$,
when the input fidelity is $0.5$ or below (see Fig.~9.2 in Ref.~\cite{van2014quantum}). 
A desired degree of fidelity, approaching one, can be obtained with several purification rounds,
see Section~\ref{sec:efficiency-analysis-Pu}.


An alternative to purification is error correction.
\begin{table}[!hbt]
\caption{Error correction codes. In the $(n,k)$ notation,
the variables $n$ and $k$ indicate the number of physical qubits and a corresponding number of logical qubits used to encode them. Available choices for $n$ or $k$ are listed, separated by vertical bars.}\label{tab:correctioncode}
\begin{center}
\begin{tabular}{ll}
\hline
Code & Example $(n,k)$  \\ 
\hline
Calderbank-Shor-Steane (CSS)~\cite{calderbank1996good,steane1996error} & $(5 \vert 7 \vert 9,1)$ \\
Hadamard-Steane & $(7,3 \vert 4)$ \\
Repetition & $(3, 1)$\\
Shor & $(9,1)$\\
Steane~\cite{steane1996error}       & $(5 \vert 7,1)$  \\
&  
\end{tabular}
\end{center}
\end{table}
Table~\ref{tab:correctioncode} lists error correction codes by name.
A distinction is made between physical qubits and logical qubits.
All error correction codes use several physical qubits to represent every abstract logical qubit.
A pair $(n,k)$ is associated with every correction code.
Parameter $n$ represents the number of physical qubits used to encode $k$ logical qubits.

Desurvire's analysis (Ref.~\cite{desurvire2009classical} Sec.~24.1) for the bit-flip model shows that a $(3,1)$ error correction code
improves fidelity when the bit-flip probability is greater than or equal to 0.5. 
Longer codes ($n>3$) do not yield better results because they increase the risks of getting more errors.
Error correction can be recursively applied, or concatenated, several times~\cite{knill1996concatenated,fan2022entanglement}.
This is further studied in Section~\ref{sec:efficiency-analysis-EC}.

\begin{figure}[h]
	\begin{center}
		\[
		\Qcircuit @C=1em @R=1em @! {	 
			& & \mbox{\begin{tabular}{cc}Physical Layer\\\scriptsize{(Bell pair creation)}\end{tabular}}  &  & & & &  & & \mbox{\begin{tabular}{cc}Link Layer \\\scriptsize{(Encoding \& Decoding)}\end{tabular}}\\
			\lstick{\ket{0}} & \gate{H} & \ctrl{3} & \qw & \lstick{\ket{\phi}_1} & \ctrl{1}  & \ctrl{2} & \multigate{5}{E} &  \qw & \lstick{\ket{\phi}'_s} &  \ctrl{2} & \ctrl{1} & \targ   &  \qw  &  \qw &  \lstick{\ket{\phi}''_s}\\
			\lstick{\ket{0}} & \qw   &  \qw &  \qw & \qw & \targ & \qw &  \ghost{E} &  \qw &  \qw & \qw & \targ &  \ctrl{-1} & \qw  & \qw  \\
			\lstick{\ket{0}} & \qw  &  \qw &  \qw & \qw & \qw & \targ & \ghost{E} &  \qw & \qw & \targ & \qw & \ctrl{-2} &  \qw & \qw  \\
			\lstick{\ket{0}} & \qw   & \targ  & \qw &  \lstick{\ket{\phi}_2} &  \ctrl{1}  & \ctrl{2} &  \ghost{E} & \qw & \lstick{\ket{\phi}'_d} &  \ctrl{2} & \ctrl{1} & \targ   &  \qw  &  \qw &  \lstick{\ket{\phi}''_d}\\
			\lstick{\ket{0}} &  \qw & \qw & \qw &  \qw & \targ & \qw  & \ghost{E} &  \qw &  \qw & \qw &  \targ & \ctrl{-1} & \qw  & \qw  \\
			\lstick{\ket{0}} &  \qw &  \qw & \qw &  \qw & \qw & \targ &  \ghost{E} & \qw & \qw &  \targ & \qw & \ctrl{-2} &  \qw & \qw  
			\gategroup{2}{2}{7}{5}{0.95em}{--}
			\gategroup{2}{6}{7}{14}{1.7em}{--}
        	}
		\]
		\caption{Encoding and decoding procedure. Gate $E$ arbitrarily corrupts qubits according to the bit-flip error model. The physical layer creates the Bell pair, while the link layer maps qubits into code words. In addition, the link layer receives the qubit triples and decodes them back to single qubits.}
		\label{fig:error-correction-circuit-3}
	\end{center}
\end{figure}
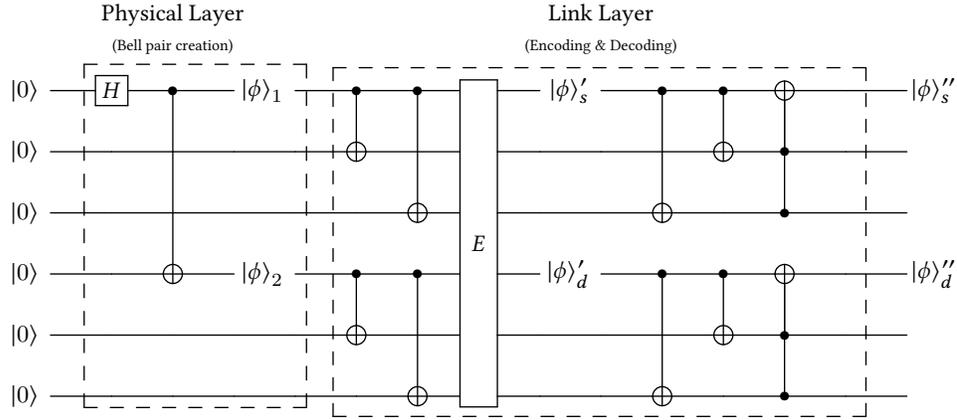
Figure~\ref{fig:error-correction-circuit-3} depicts the circuit associated with a single qubit error correction procedure, as
used in the sequel. It represents repeater $r$ and two terminals $s$ and $d$.
A $(3,1)$ repetition code is used. The leftmost rectangle represents the behavior of the repeater.
The input qubits $\ket{0}_1$ and $\ket{0}_2$ are the Bell pair selectors. Since they are both equal to $\ket{0}$, 
the $H$ gate and first $CNOT$ gate of $r$ create the Bell pair $\ket{\Phi^+}$.
The second and third $CNOT$ gates of $r$ map the two members of the pair to a specific
codeword, for instance,
$\ket{0}$ is mapped $\ket{000}$ and $\ket{1}$ is mapped $\ket{111}$. %
The quantum channel $E$ can arbitrarily corrupt qubits and introduce bit-flip errors.
The gate $E$ is the tensor product of six copies of the gate defined in Equation~\eqref{eq:bitflipmodel}, 
that is, $E= \sigma^{\otimes 6}$.
The behavior of every terminal is identical.
Terminal $s$ ($d$) receives a three-qubit code word.
The first and second $CNOT$ gates map a code word to a single qubit (first line),
that is, $\ket{000}$ is mapped $\ket{000}$ and $\ket{111}$ is mapped $\ket{100}$.
The Toffoli gate performs single qubit error correction.
Both qubits on the second and third lines need to be $\ket{1}$ to flip the qubit's value on the target on the first line.
For example, if the input to the terminal is the three qubits $\ket{011}$,
it decodes into $\ket{011}$.
The first qubit is in error.
The Toffoli gate flips the first qubit into $\ket{1}$. 

Entanglement swapping can be used in conjunction with error correction. 
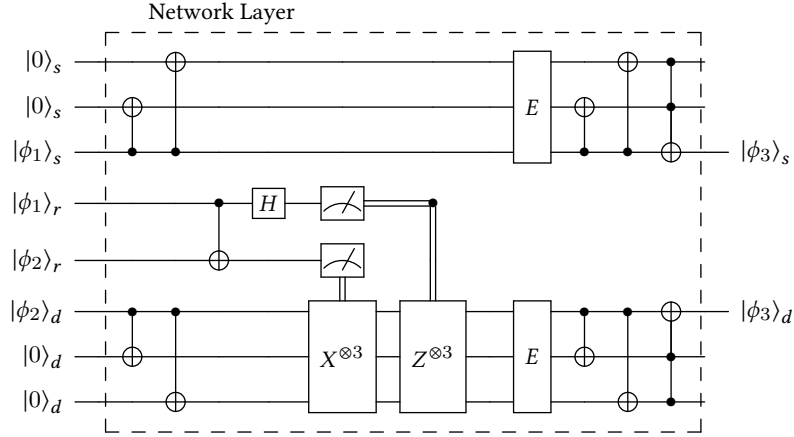
\begin{figure}[!htb]
	\begin{center}
		\[
		\Qcircuit @C=1em @R=1em {	 
			& & & &  \mbox{\begin{tabular}{cc}~~~~~Network Layer\\\scriptsize{~~~}\end{tabular}} \\
            \lstick{\ket{0}_s}      & \qw   & \qw       & \targ     & \qw & \qw & \qw & \qw                     & \qw & \multigate{2}{E} & \qw       & \targ     & \ctrl{2} & \qw \\
			\lstick{\ket{0}_s}      & \qw   & \targ     & \qw       & \qw & \qw & \qw & \qw                     & \qw & \ghost{E}         & \targ     & \qw       & \ctrl{1} & \qw  \\
			\lstick{\ket{\phi_1}_s} & \qw   & \ctrl{-1} & \ctrl{-2} & \qw & \qw & \qw & \qw                     & \qw & \ghost{E}         & \ctrl{-1} & \ctrl{-2} & \targ & \qw & \rstick{\ket{\phi_3}_s} \qw \\
			\lstick{\ket{\phi_1}_r} & \qw   & \qw       & \qw       & \ctrl{1} & \gate{H} & \meter & \control \cw \\
            \lstick{\ket{\phi_2}_r} & \qw   & \qw       & \qw       & \targ    & \qw  & \meter & \cwx \\
   			\lstick{\ket{\phi_2}_d} & \qw   & \ctrl{1}  & \ctrl{2}  & \qw      & \qw  & \multigate{2}{X^{\otimes 3}}\cwx    & \multigate{2}{Z^{\otimes 3}}\cwx & \qw & \multigate{2}{E} & \ctrl{1}  & \ctrl{2} & \targ & \qw & \rstick{\ket{\phi_3}_d} \qw \\
   			\lstick{\ket{0}_d}      & \qw   & \targ     & \qw       & \qw      & \qw  & \ghost{X^{\otimes 3}}    & \ghost{Z^{\otimes 3}}        & \qw & \ghost{E} & \targ     & \qw      & \ctrl{-1} & \qw \\
			\lstick{\ket{0}_d}      & \qw   & \qw       & \targ     & \qw      & \qw  & \ghost{X^{\otimes 3}}    & \ghost{Z^{\otimes 3}}        & \qw & \ghost{E} & \qw       & \targ    & \ctrl{-2} & \qw 
            \gategroup{2}{3}{9}{13}{2.25em}{--}
		   }
		\]
		\caption{Entanglement swapping procedure with bit-flip errors and $(3,1)$ repetition error correction. Gate $E$ arbitrarily flips qubits, $E = \sigma^{\otimes 3}$.}
		\label{fig:swapping-error-correction-circuit}
	\end{center}
\end{figure}
Figure~\ref{fig:swapping-error-correction-circuit} depicts a circuit describing a qubit-flip error correction procedure combined with entanglement swapping.
It represents repeater $r$ and two terminals $s$ and $d$.
A $(3,1)$ repetition code is used for error correction.
The entanglement-swapping procedure is under the control of repeater $r$.
The repeater $r$ entangles remote three-qubit code words, 
contrasting with base entanglement swapping that entangles remote individual qubits.
The subscripts $s$, $r$, and $d$ are used to denote qubit possession by the nodes $s$, $r$, and $d$, respectively. 
The qubits $\ket{0}_s$ and $\ket{0}_d$ are ancillary.
The first and second qubits of the Bell pair $\ket{\Phi^+}_{sr}$ shared by terminal $s$ and repeater $r$ are denoted as $\ket{\phi_1}_{s}$ and $\ket{\phi_1}_{r}$, respectively.
The  first and second qubits of the Bell pair $\ket{\Phi^+}_{rd}$ shared by repeater $r$ and terminal $d$ are denoted as $\ket{\phi_2}_{r}$ and $\ket{\phi_2}_{d}$, respectively.

Before entanglement swapping starts, the physical qubits $\ket{\phi_1}_{s}$ and $\ket{\phi_2}_{d}$ are encoded into logical qubits.
The second and third $CNOT$ gates of $s$ map qubits $\ket{\phi_1}_{s}$ to three qubit code words, that is,
$\ket{0}$ is mapped to $\ket{000}$ and $\ket{1}$ is mapped to $\ket{111}$.
Similarly, the second and third $CNOT$ gates of $d$ map qubits $\ket{\phi_2}_{d}$ to three qubit code words, that is,
$\ket{0}$ is mapped to $\ket{000}$ and $\ket{1}$ is mapped to $\ket{111}$.

The entanglement-swapping procedure introduces bit-flip errors,
modeled by the $E$ gates.
Following entanglement swapping with errors, the behavior of every terminal is similar.
Every terminal, $s$ and $d$, possesses a three-qubit block.
They apply consecutively two $CNOT$ gates and a Toffoli gate.
They correct single bit-flip errors.

In the absence of errors, the final state is
\begin{equation*}
\frac{
\ket{000}_s\ket{000}_d + \ket{111}_s\ket{111}_d
}
{\sqrt{2}}.
\end{equation*}
Projecting on the third qubit (possessed by $s$) and fourth qubit (possessed by $d$), 
it corresponds to the shared Bell pair $\ket{\Phi^+}_{sd}$,
whose first qubit is denoted as $\ket{\phi_3}_s$ and second qubit as $\ket{\phi_3}_d$.
In case of a single-bit flip error, at $s$, for example, on the third line,
the final state is
\begin{equation*}
\frac{
\ket{110}_s\ket{000}_d + \ket{111}_s\ket{111}_d
}
{\sqrt{2}}.
\end{equation*}

Projecting on the third qubit (possessed by $s$) and fourth qubit (possessed by $d$), 
it also yields the shared Bell pair $\ket{\Phi^+}_{sd}$.
Double or triple bit-flip errors at node $s$ or node $d$ are not corrected.

In the next lemma, the fidelity of a Bell pair corrected with a $(3,1)$ repetition code is characterized. 

\begin{lemma}
\label{lm:efficacy}
The fidelity of a Bell pair error-corrected with the $(3,1)$ repetition code is 
$\mathcal{F}(p)= (1 -p)^3 + 3p(1 - p)^2$.   
\end{lemma}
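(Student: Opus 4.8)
The plan is to reduce the claim to a short binomial-tail computation for the three-qubit repetition code, reading the decoder off the circuit of Figure~\ref{fig:error-correction-circuit-3} (equivalently the terminal stages of Figure~\ref{fig:swapping-error-correction-circuit}). First I would fix notation: under the $(3,1)$ repetition code each logical member of the Bell pair is encoded by $\ket{0}\mapsto\ket{000}$ and $\ket{1}\mapsto\ket{111}$, so that $\ket{\Phi^+}_{sd}$ becomes $\tfrac{1}{\sqrt2}\bigl(\ket{000}_s\ket{000}_d+\ket{111}_s\ket{111}_d\bigr)$. By the error model of Equation~\eqref{eq:bitflipmodel}, each of the three physical qubits of a block is independently hit by $X$ with probability $p=p_X$ and left alone with probability $1-p$; hence the number of bit-flips inside a block is distributed as $\mathrm{Binomial}(3,p)$, and the block at $s$ is independent of the block at $d$.

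Second, I would establish what the decoder does: the two $CNOT$ gates followed by the Toffoli gate implement majority voting on a three-qubit block, i.e., they return the correct logical value exactly when at most one of the block's three physical qubits has been flipped, and return the complemented value when two or three have been flipped. This is the only step that needs genuine verification, and it is a finite check — it suffices to push the four weight classes ($0,1,2,3$ flips) through the circuit, or, exploiting the symmetry of the code, the $2^3$ basis error patterns, exactly as in the worked examples following Figure~\ref{fig:error-correction-circuit-3}. One finds that the Toffoli fires precisely on the weight-one error supported on the top wire and on the weight-two error supported on the bottom two wires, and that in every case the top-wire output agrees with the majority iff the block carries at most one flip.

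Third, I would combine the two facts. Calling ``the block decodes correctly'' the event that it carries at most one flip, its probability is $\binom{3}{0}(1-p)^3+\binom{3}{1}p(1-p)^2=(1-p)^3+3p(1-p)^2$. Following the same convention used for purification in Equation~\eqref{eq:purification:fidelity}, the fidelity of the error-corrected Bell pair is the probability that decoding returns the intended state: a single bit-flip on the $s$-block or on the $d$-block is corrected, and the recovered pair fails to be $\ket{\Phi^+}_{sd}$ only once a block sustains two or three flips. Therefore $\mathcal{F}(p)=(1-p)^3+3p(1-p)^2$, as claimed.

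The part that needs care is not the arithmetic but (i) the exhaustive check that the $CNOT$–$CNOT$–Toffoli network genuinely realizes majority voting — that it corrects \emph{every} weight-one error and \emph{no} error of weight two or more on the decoded wire — and (ii) being explicit that in this model the fidelity attributed to the encoded Bell pair is the per-block decoding success probability, so that the single-block count above is indeed the right quantity. Both figures already illustrate the corrected weight-one case and the uncorrected higher-weight cases, so the verification is a matter of writing the case analysis cleanly rather than of discovering anything new.
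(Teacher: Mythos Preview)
Your circuit analysis (that the $CNOT$--$CNOT$--Toffoli stage implements majority voting) and the binomial-tail computation are correct and considerably more self-contained than what the paper actually does. The paper's proof takes a different and much shorter route: it simply cites Desurvire for the statement that the single-qubit fidelity under the $(3,1)$ code is $F(p)=\sqrt{(1-p)^3+3p(1-p)^2}$ (Desurvire uses the square-root convention for fidelity), and then, because a Bell pair has two qubits, sets $\mathcal{F}(p)=F(p)^2$. Your approach buys independence from the external reference and makes the decoder's behaviour explicit; the paper's approach buys brevity and makes the passage from one qubit to two (the squaring) visible.

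One place in your write-up does not quite hold together, and it is exactly the point you flag in~(ii). In your third step you justify $\mathcal{F}(p)=(1-p)^3+3p(1-p)^2$ by saying the recovered pair fails to be $\ket{\Phi^+}_{sd}$ ``only once a block sustains two or three flips''; but under the two-sided independent error model you yourself set up, the literal probability of recovering $\ket{\Phi^+}_{sd}$ is $P^2+(1-P)^2$ with $P=(1-p)^3+3p(1-p)^2$ (two logical flips cancel on a Bell pair), not $P$. The paper never computes that overlap; it reaches $P$ purely through the convention $\mathcal{F}:=F^2$ with $F$ the square-root single-qubit fidelity, which is precisely the convention carried forward into Definition~\ref{def:repcodeconcatenation} and the Bell-pair definition that follows it. Your instinct in~(ii) is right that a convention is doing the work; to make your argument close, state it as the paper does (square the single-qubit fidelity) rather than as a probability-of-correct-recovery claim, which would not yield the stated formula.
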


\begin{proof}
Considering that a Bell pair involves two qubits, the proof follows from an analysis in Ref.~\cite{desurvire2009classical}. 
It states that given the bit-flip error probability $p$, where $p$ is smaller than $1/2$, the fidelity of a single qubit corrected with a $(3,1)$ repetition code is $F(p)=\sqrt{(1 -p)^3 + 3p(1 - p)^2}$. 
Hence, 
the fidelity of the Bell pair is $\mathcal{F}(p) = F(p)^2 = (1 -p)^3 + 3p(1 - p)^2$.
\end{proof}

\begin{figure}[!htb]
\begin{center}
\[
\Qcircuit @C=1em @R=1em {	
\lstick{\ket{0}_s}      & \qw       & \qw       & \qw       & \targ     & \qw                     & \qw & \multigate{8}{E} & \qw       & \targ     & \ctrl{2} & \qw       & \qw       & \qw       & \qw \\
\lstick{\ket{0}_s}      & \qw       & \qw       & \targ     & \qw       & \qw                     & \qw & \ghost{E}        & \targ     & \qw       & \ctrl{1} & \qw       & \qw       & \qw       & \qw \\
\lstick{\ket{0}_s}      & \qw       & \targ     & \ctrl{-1} & \ctrl{-2} & \qw                     & \qw & \ghost{E}        & \ctrl{-1} & \ctrl{-2} & \targ    & \qw       & \targ     & \ctrl{6}  & \qw \\
\lstick{\ket{0}_s}      & \qw       & \qw       & \qw       & \targ     & \qw                     & \qw & \ghost{E}        & \qw       & \targ     & \ctrl{2} & \qw       & \qw       & \qw       & \qw \\
\lstick{\ket{0}_s}      & \qw       & \qw       & \targ     & \qw       & \qw                     & \qw & \ghost{E}        & \targ     & \qw       & \ctrl{1} & \qw       & \qw       & \qw       & \qw \\
\lstick{\ket{0}_s}      & \targ     & \qw       & \ctrl{-1} & \ctrl{-2} & \qw                     & \qw & \ghost{E}        & \ctrl{-1} & \ctrl{-2} & \targ    & \targ     & \qw       & \ctrl{3}  & \qw \\
\lstick{\ket{0}_s}      & \qw       & \qw       & \qw       & \targ     & \qw                     & \qw & \ghost{E}        & \qw       & \targ     & \ctrl{2} & \qw       & \qw       & \qw       & \qw \\
\lstick{\ket{0}_s}      & \qw       & \qw       & \targ     & \qw       & \qw                     & \qw & \ghost{E}        & \targ     & \qw       & \ctrl{1} & \qw       & \qw       & \qw       & \qw \\
\lstick{\ket{\phi_1}_s} & \ctrl{-3} & \ctrl{-6} & \ctrl{-1} & \ctrl{-2} & \multigate{0}{Swapping} & \qw & \ghost{E}        & \ctrl{-1} & \ctrl{-2} & \targ    & \ctrl{-3} & \ctrl{-6} & \targ     & \rstick{\ket{\phi_3}_s}
}
\]
\caption{Entanglement swapping procedure with concatenated $(3,1)$ repetition error correction. Only the circuit at the location of terminal $s$ is shown. A similar circuit for terminal $d$ can be drawn. Gate $E$ arbitrarily flips qubits, $E = \sigma^{\otimes 9}$.}
\label{fig:swapping-error-correction-repetition}
\end{center}
\end{figure}
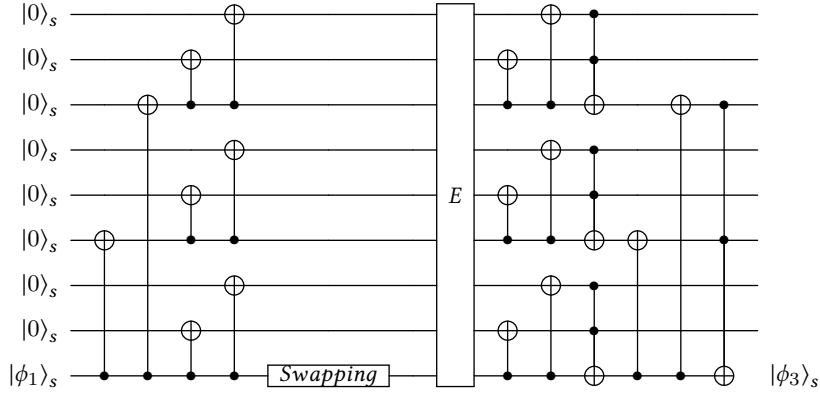
Error correction can also be iterated.
Figure~\ref{fig:swapping-error-correction-repetition} shows a two-concatenation scenario, focusing on the qubit $\ket{\phi_1}_s$ of terminal $s$.
The $(3,1)$ repetition error correction is used.
The qubit $\ket{\phi_1}_s$ is mapped to a three-qubit codeword, the first, fourth, and seventh lines from the bottom.
Each qubit of this codeword is recursively mapped to a three-qubit low-level codeword.
Following the swapping procedure, gate $E$ models the occurence of by flip-errors. 
Then, every low-level three-qubit block is mapped to a three-qubit block, 
which is in turn mapped to the single qubit $\ket{\phi_3}_s$.
Using this model, error correction can be applied several times recursively to improve the quality of a qubit involved in an entanglement swapping procedure on every side.
This is analyzed in detail in the following section.

\section{Analysis of Repeated Purification and Concatenated Error Correction}
\label{sec:efficiency-analysis}

The fidelity of a Bell pair may be increased with repeated purification and concatenated error correction.
This capability is analyzed in detail in this section.
The precise requirements for repeated purification or concatenated error correction are identified and compared against each other.

\subsection{Purification Analysis}
\label{sec:efficiency-analysis-Pu}

Let $F$ be the initial fidelity of a Bell pair, with $F > 1/2$.
Using repeated purification,
let us consider the fidelity sequence $F_n$ defined recursively as follows:
\begin{equation}
\label{eq:purent4}
F_0 = F \mbox{ and } F_{n} = \frac{F_{n-1}^2}{F_{n-1}^2 + (1 - F_{n-1} )^2} \mbox{ for } n=1,2,3,\ldots
\end{equation}
Observe that the sequence $F_n$ is increasing. Therefore the limit $\lambda:= \lim_{n \to \infty} F_n$ exists. Passing to the limit in the righthand side of Equation~\eqref{eq:purent4}, we conclude that $\lambda =1$. This follows from the fact that
$
\lambda = \frac{\lambda^2}{\lambda^2 + (1 - \lambda )^2}, 
$
which gives the solution $\lambda = 1$. Therefore taking into account that $F_n \to 1$ as $n \to \infty$, we can make repeated purifications to the resulting state 
\begin{equation}
\label{eq:purent6}
\rho_n = F_n \Ket{\Phi^{+}} \Bra{\Phi^{+}} + (1-F_n ) \Ket{\Psi^{+}} \Bra{\Psi^{+}}.
\end{equation}

To get an idea of the costs required to obtain a certain fidelity, 
the speed of convergence of repeated purification is investigated. 
The main question of interest is the following:
\begin{quote}
Given a starting value $F_0 > 1/2$, how many purification repetitions are needed to obtain a certain level of fidelity? 
Namely, given an arbitrarily small  $\epsilon > 0$ for what value of $n$ can we claim $F_n > 1 -\epsilon$? 
\end{quote}
The following analysis of our main question requires some claims we state and prove below as lemmas. 

Consider the operation $F \to F' := \frac{F^2}{F^2 + (1 - F )^2}$. 
We look at the ratio of the new value $F'$ versus the old $F$, namely
$
\frac{F'}{F} = \frac{\frac{F^2}{F^2 + (1 - F )^2} }{F} = \frac{F}{F^2 + (1 - F )^2} 
$
and prove the following lemma.

\begin{lemma}
\label{lm:speed1}
If $1/2 < F < 1/{\sqrt{2}}$ then 
\begin{equation}
\label{eq:speed1}
\frac{F'}{F} > 1+ \left(F - \frac 12\right) = \frac 12 + F.
\end{equation}
\end{lemma}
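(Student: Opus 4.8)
The plan is to clear denominators and reduce the claim to a polynomial inequality on the interval $(1/2,1/\sqrt2)$, then factor. Since $F^2+(1-F)^2>0$ for all real $F$, the asserted inequality $\frac{F}{F^2+(1-F)^2}>\frac12+F$ is equivalent to
\[
F>\Bigl(\tfrac12+F\Bigr)\bigl(F^2+(1-F)^2\bigr).
\]
First I would simplify $F^2+(1-F)^2=2F^2-2F+1$, expand the right-hand side, and collect terms, which turns the inequality into $2F^3-F^2-F+\tfrac12<0$; call the left-hand side $g(F)$.

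The key step is the factorization of $g$. I expect $F=\tfrac12$ and $F=\tfrac1{\sqrt2}$ to be roots, because these are precisely the endpoints of the interval under consideration and the inequality is designed to be tight there; indeed $g(\tfrac12)=0$ and $g(\tfrac1{\sqrt2})=0$. Carrying out the polynomial division gives
\[
g(F)=(2F-1)\Bigl(F^2-\tfrac12\Bigr).
\]
(One can also verify this directly by expanding the right-hand side.)

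Finally, for $\tfrac12<F<\tfrac1{\sqrt2}$ the factor $2F-1$ is strictly positive while $F^2-\tfrac12$ is strictly negative, so $g(F)<0$ on the whole open interval, which is exactly what was needed. The equality $1+(F-\tfrac12)=\tfrac12+F$ in the statement is immediate, so nothing further is required.

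I do not anticipate a real obstacle here: the argument is elementary once the denominators are cleared. The only point needing a little care is spotting (or confirming) the factorization of the cubic, and even that is guided by the observation that the two interval endpoints must be roots.
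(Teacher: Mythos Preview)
Your proof is correct and follows essentially the same approach as the paper: clear denominators, reduce to a polynomial inequality, factor, and read off the signs on $(1/2,1/\sqrt{2})$. The only cosmetic difference is that the paper first multiplies both sides by $2$, arriving at the equivalent factorization $(2F-1)(2F^2-1)<0$ rather than your $(2F-1)(F^2-\tfrac12)<0$; the two differ only by a constant factor.
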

\begin{proof}
Multiply Inequality~\eqref{eq:speed1} by two to obtain $\frac{2F}{F^2 + (1 - F )^2} >1 + 2F$. Then multiply out to derive the equivalent form $2F > (F^2 + (1 - F )^2) (1+2F)$.
When we expand and simplify the last inequality, it turns out to be equivalent to $(2F-1)(2F^2 -1) < 0$. Since by assumption $1/2 < F$, 
we conclude that the last inequality is equivalent to $F < 1/{\sqrt{2}}$, which is valid from the hypothesis of the lemma. 
\end{proof}

The next observation is that the purification operation is monotone increasing.
\begin{lemma}
\label{lm:speed0}
The purification function is monotone increasing, namely
if $F < G$ then $\frac{F}{F^2 + (1 - F )^2} < \frac{G}{G^2 + (1 - G )^2}$.
\end{lemma}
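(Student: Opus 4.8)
The plan is to prove the monotonicity by a direct sign analysis of a difference, and I expect this to go through for \emph{all} admissible fidelities, with no need to confine $F$ and $G$ to any sub-interval. I would read the lemma as the assertion that the purification operation $P(F):=F^2/\bigl(F^2+(1-F)^2\bigr)$ of Eq.~\eqref{eq:purent4} is strictly increasing, i.e.\ $P(F)<P(G)$ whenever $0<F<G<1$. Since $F^2+(1-F)^2=2F^2-2F+1$ has negative discriminant and positive leading coefficient, every denominator involved is strictly positive, so the inequality is equivalent to the positivity of the numerator obtained after clearing denominators.

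First I would write $P(G)-P(F)$ over the common (positive) denominator $\bigl(F^2+(1-F)^2\bigr)\bigl(G^2+(1-G)^2\bigr)$, so that the sign is controlled by the numerator $N:=G^2\bigl(F^2+(1-F)^2\bigr)-F^2\bigl(G^2+(1-G)^2\bigr)$. The key simplification is that the cross term $F^2G^2$ cancels, leaving the difference of two squares $N=G^2(1-F)^2-F^2(1-G)^2=\bigl(G(1-F)\bigr)^2-\bigl(F(1-G)\bigr)^2$. Factoring then yields \[ N=\bigl(G(1-F)-F(1-G)\bigr)\bigl(G(1-F)+F(1-G)\bigr)=(G-F)\,\bigl(F+G-2FG\bigr). \]

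The decisive step is the sign of each factor. The first, $G-F$, is positive by hypothesis. For the second, rewriting $F+G-2FG=F(1-G)+G(1-F)$ makes positivity transparent: on $(0,1)$ each of $F,\,G,\,1-F,\,1-G$ is strictly positive, so the cofactor is a sum of two positive terms. Hence $N>0$ and $P(F)<P(G)$, unconditionally; as a consistency check, the same follows from $P'(F)=2F(1-F)/\bigl(F^2+(1-F)^2\bigr)^2>0$ on $(0,1)$. The only real obstacle is the algebraic observation that the $F^2G^2$ terms cancel so that $N$ becomes a difference of squares; once that structure is seen, the unconditional positivity of the cofactor $F(1-G)+G(1-F)$ is immediate, which is exactly the unqualified monotonicity the lemma asserts.
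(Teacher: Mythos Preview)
Your proof is correct for the purification map $P(F)=F^2/\bigl(F^2+(1-F)^2\bigr)$, and the paper's own proof consists of the single sentence ``The proof is straightforward,'' so there is no substantive approach to compare against. Your difference-of-squares factorisation $N=(G-F)\bigl(F(1-G)+G(1-F)\bigr)$ is clean and makes the unconditional monotonicity on $(0,1)$ transparent; the derivative check you add as a consistency test is equally valid.

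One point worth flagging: the formula actually printed in the lemma statement is $F/\bigl(F^2+(1-F)^2\bigr)$, i.e.\ the ratio $F'/F$ introduced just before Lemma~\ref{lm:speed1}, not $P(F)$ itself. That ratio is \emph{not} monotone on all of $(0,1)$: its derivative has numerator $1-2F^2$, so it increases on $(0,1/\sqrt{2})$ and decreases thereafter (for instance it takes a larger value at $F=1/\sqrt{2}$ than at $F=0.9$). Your decision to read the lemma as a statement about $P$, guided by the phrase ``purification function'' and by Eq.~\eqref{eq:purent4}, is the right call and is exactly what the application in the proof of Theorem~\ref{thm:speed1} requires; but strictly speaking the displayed inequality is false without a domain restriction, and your argument (which squares $F$ and $G$ in the numerators from the outset) does not address it verbatim. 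You might want to note this discrepancy explicitly.
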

\begin{proof}
The proof is straightforward.
\end{proof}

\begin{lemma}
\label{lm:speed2}
If $F_0 < 1/{\sqrt{2}}$ then in at most $n+1$ iterations of the purification operation, where
\begin{equation}
\label{eq:speed2}
n = \left\lceil 
\frac{-\log_2 (F_0 \sqrt{2})}{\log_2 (F_0 + 1/2)}
\right\rceil 
\end{equation}
we have that $F_{n} \geq 1/{\sqrt{2}}$
\end{lemma}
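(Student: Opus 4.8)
The plan is to iterate the multiplicative growth estimate of Lemma~\ref{lm:speed1} until the sequence first reaches $1/\sqrt 2$, and then to convert the resulting geometric lower bound into the stated logarithmic count.

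First I would let $k^\star$ denote the smallest index with $F_{k^\star} \ge 1/\sqrt 2$; this is well defined because the excerpt already established that $F_n$ is increasing with $F_n \to 1$, and $k^\star \ge 1$ since $F_0 < 1/\sqrt 2$ by hypothesis (together with $F_0 > 1/2$, the standing assumption of this subsection). For every $j$ with $0 \le j \le k^\star - 1$ we then have $1/2 < F_j < 1/\sqrt 2$, so Lemma~\ref{lm:speed1} applies and gives $F_{j+1}/F_j > 1/2 + F_j$. Because the sequence is increasing, $F_j \ge F_0$, so (invoking monotonicity, Lemma~\ref{lm:speed0}, to replace the varying factor by a uniform one) $F_{j+1} > (1/2 + F_0)\,F_j$ for all such $j$. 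Multiplying these inequalities yields $F_m > (1/2 + F_0)^{\,m} F_0$ for $1 \le m \le k^\star$.

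Next I would extract the bound on $k^\star$. For each $j$ with $1 \le j \le k^\star - 1$ the bound above gives $(F_0 + 1/2)^{\,j} F_0 < F_j < 1/\sqrt 2$, and for $j = 0$ this same inequality is just the hypothesis $F_0 < 1/\sqrt 2$; so it holds for all $0 \le j \le k^\star - 1$. Taking base-$2$ logarithms is legitimate because $F_0 + 1/2 > 1$ and $F_0\sqrt 2 < 1$, so both $\log_2(F_0+1/2)$ and $-\log_2(F_0\sqrt 2)$ are positive, and the inequality rearranges to $j < \dfrac{-\log_2(F_0\sqrt 2)}{\log_2(F_0 + 1/2)}$. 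Applying this with $j = k^\star - 1$ and recalling the definition of $n$ in~\eqref{eq:speed2} gives $k^\star - 1 < n$, hence $k^\star \le n$. Since the sequence is increasing, $F_n \ge F_{k^\star} \ge 1/\sqrt 2$, which is the asserted conclusion, a fortiori within the claimed $n+1$ iterations.

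The only delicate points are bookkeeping: keeping every inequality strict so the ceiling in~\eqref{eq:speed2} is not off by one, separating the $j=0$ case (handled directly by the hypothesis) from $j\ge 1$ (handled by the telescoped product), and remembering to use monotonicity to pass from the factor $1/2 + F_j$ to the uniform factor $1/2 + F_0$ before multiplying. I do not expect any genuine obstacle beyond this careful accounting.
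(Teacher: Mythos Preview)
Your proof is correct and follows essentially the same approach as the paper: telescope the ratios $F_{j+1}/F_j$ using Lemma~\ref{lm:speed1}, replace each factor $F_j + 1/2$ by the uniform $F_0 + 1/2$ via monotonicity to obtain the geometric lower bound $F_m > (F_0 + 1/2)^m F_0$, and then take logarithms to extract the iteration count. Your formulation via the stopping time $k^\star$ is in fact a bit cleaner than the paper's somewhat informal contradiction argument (the paper derives the telescoped bound under the hypothesis $F_n < 1/\sqrt{2}$ and then observes that the right-hand side exceeds $1/\sqrt{2}$ for the stated $n$), but the substance is identical.
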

\begin{proof}
Repeat purification as in $F_0 = F$ and $F_{n} = \frac{F_{n-1}^2}{F_{n-1}^2 + (1-F_{n-1})^2}$ and use Lemma~\ref{lm:speed1} to conclude that if $F_n < 1/{\sqrt{2}}$ then
\begin{align*}
\frac{F_{n}}{F_0} 
&= \frac{F_{n}}{F_{n-1}} \cdot \frac{F_{n-1}}{F_{n-2}} \cdots \frac{F_1}{F_0}
> \prod_{i=0}^{n-1} \left(F_i + \frac 12 \right) 
\geq \left(F_0 + \frac 12 \right)^n .
\end{align*}
We conclude that 
$$F_{n} \geq  \left(F_0 + \frac 12 \right)^n F_0 .
$$
It follows that the right-hand side above can be made greater than $1/{\sqrt{2}}$ for 
$
n \geq \frac{-\log_2 (F_0 \sqrt{2})}{\log_2 (F_0 + 1/2)} ,
$
which implies that
$F_{n} \geq 1/{\sqrt{2}}$. 
\end{proof}

We conclude by proving the following theorem. 

\begin{theorem}
\label{thm:speed1}
For any initial purification value $F_0$ such that $1/2 < F_0 < 1/{\sqrt{2}}$ and any $0 < \epsilon < 1$ arbitrarily small, in at most $m \leq \left\lceil \frac{-\log_2 (F_0 \sqrt{2})}{\log_2 (F_0 + 1/2)} \right\rceil  + \log_2 \log_2 (1/\epsilon)$ repetitions of the purification operation we will have that $F_m \geq 1 - \epsilon$. Moreover, if already $F_0 \geq 2/3$ then $m \leq  \log_2 \log_2 (1/\epsilon)$. 
\end{theorem}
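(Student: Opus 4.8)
\emph{Proof plan.} The plan is to change variables to the ``odds'' quantity $a_n := (1-F_n)/F_n$ instead of working with $F_n$ directly. First I would check that under this substitution the recursion~\eqref{eq:purent4} collapses to a pure squaring map:
\begin{equation*}
a_{n+1} \;=\; \frac{1-F_{n+1}}{F_{n+1}} \;=\; \frac{(1-F_n)^2}{F_n^2} \;=\; a_n^2 ,
\end{equation*}
so that $a_n = a_0^{\,2^n}$; since $F_0 > 1/2$ forces $a_0 < 1$, this already exhibits the doubly exponential decay. Because $1-F_n = a_n/(1+a_n) \le a_n$, it then suffices to find $n$ with $a_0^{2^n} \le \epsilon$, i.e.\ with $2^n \ge \log_2(1/\epsilon)/\log_2(1/a_0)$, to conclude $F_n \ge 1-\epsilon$.

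Next I would dispatch the ``moreover'' clause: if $F_0 \ge 2/3$ then $a_0 = (1-F_0)/F_0 \le 1/2$, hence $\log_2(1/a_0) \ge 1$, and already $n = \lceil \log_2\log_2(1/\epsilon)\rceil$ gives $2^n \ge \log_2(1/\epsilon)$, so $F_n \ge 1-\epsilon$. This is the claimed $m \le \log_2\log_2(1/\epsilon)$.

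For the general range $1/2 < F_0 < 1/\sqrt{2}$ I would run a two-phase argument. \emph{Phase~1 (warm-up):} invoke Lemma~\ref{lm:speed2} to obtain, after $n_1 := \lceil -\log_2(F_0\sqrt{2})/\log_2(F_0 + 1/2) \rceil$ purification steps, a value $F_{n_1} \ge 1/\sqrt{2}$; monotonicity of the purification map (Lemma~\ref{lm:speed0}), together with the fact that $a_n$ is strictly decreasing, guarantees we stay at or above $1/\sqrt2$ thereafter. \emph{Phase~2 (quadratic tail):} at that point $a_{n_1} \le (1-1/\sqrt{2})/(1/\sqrt{2}) = \sqrt{2}-1 < 1/2$, so restarting the counter from $F_{n_1}$ and reusing the estimate of the first paragraph, a further $k := \lceil \log_2\log_2(1/\epsilon)\rceil$ steps yield $F_{n_1+k} \ge 1-\epsilon$. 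Summing, $m = n_1 + k$ matches the stated bound.

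\emph{Expected obstacle.} Once the identity $a_n = a_0^{2^n}$ is written down there is no genuine analytic difficulty; the only delicate point is the bookkeeping of the ceilings and of the spare ``$+1$'' that appears in the statement of Lemma~\ref{lm:speed2}. I would absorb this using the strict slack $\sqrt{2}-1 < 1/2$ in the odds ratio at the end of Phase~1 (so $\log_2(1/a_{n_1}) = \log_2(\sqrt2+1) > 1$ and Phase~2 actually converges a little faster than the bound requires), and by reading the non-integer term $\log_2\log_2(1/\epsilon)$ as implicitly rounded up; a fully rigorous write-up should state these rounding conventions explicitly.
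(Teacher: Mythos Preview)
Your proposal is correct and follows essentially the same route as the paper: the paper's observation~\eqref{eq:speed4} that $F = k/(k+1)$ implies $F' = k^2/(k^2+1)$ is exactly your squaring identity $a_{n+1}=a_n^2$ in disguise (with $k = 1/a$), and both arguments invoke Lemma~\ref{lm:speed2} for the warm-up phase before exploiting this doubly exponential tail. Your odds-ratio presentation is slightly cleaner in that it handles any starting value $a_{n_1}<1$ uniformly rather than specializing to $F_{n+1}=2/3$, but the mathematical content is the same.
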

\begin{proof}

From the discussion above, we see that starting from any initial value $F_0$ such that $1/2 < F_0 < 1/{\sqrt{2}}$ we can repeat the purification operation at $n$ times, where $n$ is given in Equation~\eqref{eq:speed2} so that $F_{n+1} \geq 1/{\sqrt{2}}$. 

Next, we indicate how many additional steps are needed to obtain accuracy. 
First of all, observe that $1/{\sqrt{2}} > 2/3$. A simple calculation using the definition of the purification operation shows that 
\begin{equation}
\label{eq:speed4}
\mbox{if $F= \frac{k}{k+1}$ then $F' = \frac{k^2}{k^2+1}$} .
\end{equation}
Because of Lemma~\ref{lm:speed0}, we may assume without loss of generality that $F_{n+1} = 2/3$.  Observe that $F_{n+1} = 2/3 = \frac{2^{2^0}}{2^{2^0} +1}$. By induction, if we assume that $F_{n+k}  = \frac{2^{2^{k-1}}}{2^{2^{k-1}}+1}$ then using the previously proved Assertion~\eqref{eq:speed4} we have that $F_{n+k+1}  = \frac{2^{2^{k}}}{2^{2^{k}}+1}$. 
It follows that for any required accuracy $\epsilon >0$, we have that
$
F_{n+k+1} > 1 - \epsilon ,
$
provided that $k > \log_2 \log_2 (1/\epsilon)$. 
\end{proof}

Next, Theorem~\ref{thm:repeated-fidelity-improvement} characterizes the memory cost of a one-time execution of purification and 
it provides an upper bound on the number of qubits required inside each repeater to support repeated execution of purification.

\begin{theorem}[Repeated Purification]
\label{thm:repeated-fidelity-improvement}
For a path of length $\ell$ (a positive integer), a repeater requires at most $2^{n + \ell - 1}$ qubits of memory to complete $n$ purification repetitions.
\end{theorem}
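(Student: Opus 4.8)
The plan is to combine two independent sources of qubit blow-up multiplicatively: a factor $2^n$ coming from the recursive (``nested'') nature of the purification recursion in Equation~\eqref{eq:purent4}, and a factor $2^{\ell-1}$ coming from establishing each base Bell pair across a path of length $\ell$.

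\emph{Step 1 (one base Bell pair over a length-$\ell$ path).} First I would show that to establish a single Bell pair $\ket{\Phi^+}_{sd}$ between the endpoints of a path of length $\ell$, every repeater on the path needs to hold at most $2^{\ell-1}$ qubits at any instant, by induction on $\ell$. For $\ell=1$ the two endpoints are directly linked and each holds one qubit of the pair, so $2^{0}=1$ suffices (and there is no interior repeater). For the inductive step, a length-$\ell$ pair is built from a length-$(\ell-1)$ pair together with one fresh link pair via a single entanglement swap at the joining repeater; allowing the fidelity-preserving step of pairing two independent copies before the swap (exactly the local $CNOT$/measure move of Fig.~\ref{fig:purificaton}), the worst-case requirement at an interior repeater at most doubles, giving $2\cdot 2^{\ell-2}=2^{\ell-1}$. (One could instead invoke the nested-swapping discussion, where a repeater needs only two qubits per pair, and note $2\le 2^{\ell-1}$ for $\ell\ge 2$; either way $2^{\ell-1}$ is a valid upper bound.)

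\emph{Step 2 (folding in the $n$ purification rounds).} Next I would let $T(\ell,n)$ denote the peak number of qubits a repeater must store to complete $n$ purification repetitions over a length-$\ell$ path, and observe from Fig.~\ref{fig:purificaton} and Equation~\eqref{eq:purent4} that purification is a $2\to 1$ operation: the round-$n$ pair is produced from two independent copies of the round-$(n-1)$ pair. Hence $T(\ell,n)\le 2\,T(\ell,n-1)$, since in the worst case a repeater simultaneously participates in both copies. With the base case $T(\ell,0)\le 2^{\ell-1}$ from Step~1, unrolling gives $T(\ell,n)\le 2^{n}\cdot 2^{\ell-1}=2^{\,n+\ell-1}$, which is the claimed bound. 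For a sanity check I would note that $\ell=2,\ n=1$ yields $2^{2}=4$, matching the four qubits a repeater holds in Fig.~\ref{fig:purificaton}, while $\ell=1$ yields $2^{n}$, matching the $2^{n}$ raw pairs consumed by $n$ rounds on a direct link.

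The delicate point is Step~1: pinning down exactly which entanglement-generation / swapping / early-purification schedule is being charged, and arguing that its peak memory at a repeater at most doubles per hop, where ``peak'' means the maximum over all repeaters and over the whole time-line rather than an instantaneous or amortized count. One must also take care that this per-hop doubling is a separate, fixed overhead and is not accidentally re-counting the $n$ end-to-end purification rounds. Once the $2^{\ell-1}$ base bound is in place, the rest is just the routine unrolling of $T(\ell,n)\le 2\,T(\ell,n-1)$.
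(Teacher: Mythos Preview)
Your proof is correct and reaches the bound by the same two-stage doubling scheme as the paper, but with the anchor shifted by one: the paper runs the $\ell$-induction for the case $n=1$ (one purification round already folded into the path construction, yielding $2^{\ell}$ qubits at a repeater) and then multiplies by $2^{n-1}$ for the remaining rounds, whereas you run the $\ell$-induction for $n=0$ (a raw pair, bounded by $2^{\ell-1}$) and then multiply by $2^{n}$. A second structural difference is that the paper's $\ell$-induction splits the path at an \emph{arbitrary} interior repeater as $\ell=\ell_1+\ell_2$ and bounds $2\bigl(2^{\ell_1}+2^{\ell_2}\bigr)\le 2^{\ell}$, mirroring the nested-swapping schedule, rather than your sequential $(\ell-1)+1$ split. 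Your primary Step~1 argument is a little muddled: you slip ``pairing two independent copies before the swap'' (i.e., a purification) into what you have labeled the $n=0$ base, so it is not clear which rounds are being counted in Step~1 versus Step~2; your parenthetical alternative---plain swapping costs only two qubits per repeater, and $2\le 2^{\ell-1}$ for $\ell\ge 2$---is the clean way to close Step~1 and is all you actually need for the upper bound.
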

\begin{proof} 
The proof is by induction on the path's length $\ell$. 
Firstly, let us assume that $n$ is equal to one, 
i.e., one-time execution of purification.

\noindent
{\em Base Case: }($\ell = 1$).
The path consists of a source $s$ and a destination $d$, connected by a link.
The nodes $s$ and $d$ are terminals or repeaters.
Using two Bell pairs shared by $s$ and $d$, established with direct communications 
and purification, their fidelity is improved into a single Bell pair, as shown in Figure~\ref{fig:purificaton}. 
Every endpoint, $s$ and $d$, uses two qubits, which is equal to $2^{\ell}$ qubits, 
for $\ell = 1$.

\noindent
{\em Inductive Step:} ($\ell > 1$).
Let $\ell_1$ and $\ell_2$ be the number of links to the left and right of repeater $r$ on a path of length $\ell$ between nodes $s$ and $d$, 
with $\ell = \ell_1 + \ell_2$.
Clearly, both $\ell_1$ and $\ell_2$ are equal to or greater than one.
Let us assume that the nodes $s$ and $r$ (respectively, $r$ and $d$) share two Bell pairs. 
Using these four Bell pairs and two entanglement swapping operations, 
the repeater $r$ establishes two Bell pairs between $s$ and $d$. 
Purification improves their fidelity into a single Bell pair shared between $s$ and $d$. 
To create the two Bell pairs between $s$ and $r$ (respectively, $r, d$),
repeater $r$ requires $2 \cdot 2^{\ell_1}$ 
(respectively, $2 \cdot 2^{\ell_2}$) qubits,
for a total of 
$2 \cdot \left( 2^{\ell_1} + 2^{\ell_2} \right)$.
Because $\ell_1, \ell_2$ are greater than or equal to one but lower than $\ell$,
it is equal to or lower than $2 \cdot \left( 2^{\ell - 1} + 2^{\ell - 1} \right) = 2^{\ell}$ qubits.

This proves the theorem when $n$ is equal to one. 
Observe that the general statement of the theorem regarding the number $n$ of repetitions follows immediately by applying $n$ iterations of the above argument.
Since every additional repetition of the procedure by repeater $r$ multiplies the number of qubits by two, 
we have that $n$ repetitions need $2^{n-1} \cdot 2^{\ell} = 2^{n + \ell - 1}$ qubits.
\end{proof}

\begin{definition}[Purification procedure]
\label{def:purification-procedure}
Involving three nodes, where the middle one is a repeater $r$ while the two others $s$ and $d$ are terminals of repeaters, the purification procedure consists of seven operations, namely, the generation of four Bell pairs (two instances between $s$ to $r$ and two instances between $r$ to $d$), two entanglement swapping operations and one purification cycle under the control of $r$.
\end{definition}

\begin{corollary}
$n$ repetitions of purification by a repeater $r$ for a path of length $\ell$ requires at most $7 n ( \ell - 1)$ operations.
\end{corollary}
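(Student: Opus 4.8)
The plan is to mirror the induction on path length $\ell$ used in the proof of Theorem~\ref{thm:repeated-fidelity-improvement}, now counting repeater operations instead of stored qubits. Let $c(\ell)$ denote the number of operations performed by the repeaters along a path of length $\ell$ during a single purification round. By Definition~\ref{def:purification-procedure}, each time a repeater orchestrates one purification round over the segment it controls it performs exactly $7$ operations (four Bell-pair generations, two entanglement swaps, one purification cycle), so it suffices to establish $c(\ell) \le 7(\ell-1)$; multiplying by the number $n$ of rounds then yields the claimed bound $7n(\ell-1)$.

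For the base cases, when $\ell = 1$ the path consists of two terminals joined by a direct link, no repeater is involved, and $c(1) = 0 = 7(1-1)$. When $\ell = 2$ the unique repeater $r$ runs the seven-operation procedure of Definition~\ref{def:purification-procedure} once, so $c(2) = 7 = 7(2-1)$. For the inductive step with $\ell > 2$, I would take the repeater $r$ that splits the path into sub-paths of lengths $\ell_1$ and $\ell_2$ with $\ell_1 + \ell_2 = \ell$ and $1 \le \ell_i < \ell$, exactly as in the proof of Theorem~\ref{thm:repeated-fidelity-improvement}. Then $r$ contributes its own $7$ operations, while all operations on each sub-path are accounted for by the induction hypothesis, giving $c(\ell) \le c(\ell_1) + c(\ell_2) + 7 \le 7(\ell_1 - 1) + 7(\ell_2 - 1) + 7 = 7(\ell_1 + \ell_2 - 1) = 7(\ell - 1)$. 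Applying this argument once per round, as in the last paragraph of the proof of Theorem~\ref{thm:repeated-fidelity-improvement}, gives $7n(\ell-1)$ operations over the $n$ repetitions and closes the proof.

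The step I expect to need the most care is justifying the recurrence $c(\ell) \le c(\ell_1) + c(\ell_2) + 7$, i.e.\ that each repeater is charged only once per round. The potential pitfall is that $r$'s procedure consumes two Bell pairs from each side, which naively might look like re-running each sub-path protocol twice and would inflate the constant; the point to make precise is that Definition~\ref{def:purification-procedure} already bundles the ``generation of four Bell pairs'' into the fixed cost of the procedure, so the pairs handed up from the sub-paths are not re-counted and the $\ell-1$ repeaters of the path are charged exactly once each. Once that bookkeeping is pinned down, the remainder is the same telescoping sum as in Theorem~\ref{thm:repeated-fidelity-improvement}.
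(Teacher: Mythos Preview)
Your proposal is correct and matches the paper's approach: the paper's proof is a one-line pointer to nested entanglement, Theorem~\ref{thm:repeated-fidelity-improvement}, and Definition~\ref{def:purification-procedure}, and you have simply unpacked that pointer into the explicit induction on $\ell$ that those references entail, arriving at the same $7(\ell-1)$-per-round count (one $7$-operation procedure per repeater) and then multiplying by $n$. Your flagged bookkeeping concern is exactly the right place to be careful, and your resolution---that Definition~\ref{def:purification-procedure} already absorbs the Bell-pair generation cost so each of the $\ell-1$ repeaters is charged once per round---is the reading the paper intends.
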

\begin{proof}
It follows from the definition of nested entanglement, the proof of Theorem~\ref{thm:repeated-fidelity-improvement}, as well as Definition~\ref{def:purification-procedure}.
\end{proof}

\subsection{Error Correction Analysis}

\label{sec:efficiency-analysis-EC}

As demonstrated in Lemma~\ref{lm:efficacy}, the efficacy of error correction can also be captured using the concept of fidelity.
Fidelity is used to quantifying the quality of Bell pairs established by a quantum system.
The fidelity of a Bell pair error-corrected with the $(3,1)$ repetition code is 
$\mathcal{F}(p)= (1 -p)^3 + 3p(1 - p)^2$, where $p$ is the probability of a qubit transformation from $\ket{0}$ to $\ket{1}$, or vice versa

\begin{definition}[$(3,1)$ repetition error correction concatenation for a single qubit]\label{def:repcodeconcatenation}
Building on Lemma~\ref{lm:efficacy}, 
let us define the fidelity of concatenated error corrections as
\begin{equation}
F_0=F(p) \mbox{ and }
F_{n} = \sqrt{ F_{n-1}^3 + 3(1 - F_{n-1})F_{n-1}^2 } \mbox{ for the number of concatenations } n=1,2,3\ldots.
\end{equation}
Moreover, let us define the sequence $\{ F_n : n\geq 1 \}$ with $F_0 = 1/2$.
\end{definition}

\begin{definition}[$(3,1)$ repetition error correction for a Bell pair]
Concatenated $n$ times, for the $(3,1)$ repetition error correction, let us define the fidelity of a Bell pair as
\begin{equation}
\label{eq:eq:error-correction-pair-fidelity}
\mathcal{F}_n = F_{n}^2 = F_{n-1}^3 + 3(1 - F_{n-1})F_{n-1}^2 .
\end{equation}
\end{definition}

\begin{lemma}
\label{lm:increasing}
The sequence $\{ F_n : n\geq 1 \}$ is monotone non-decreasing, and its limit as $n$ goes to infinity equals one.
\end{lemma}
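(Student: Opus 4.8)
The plan is to first collapse the recursion to a single one–variable map and then run a standard monotone–convergence argument. Writing $g(x):=\sqrt{x^{3}+3(1-x)x^{2}}$, a one–line computation gives $x^{3}+3(1-x)x^{2}=3x^{2}-2x^{3}=x^{2}(3-2x)$, so that for $x\in[0,3/2]$ we have $g(x)=x\sqrt{3-2x}$ and the recurrence becomes simply $F_{n}=g(F_{n-1})$ with $F_{0}=1/2$. Everything then reduces to analysing $g$ on the interval $(0,1)$.

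Next I would prove the two–sided bound: for every $x$ with $0<x<1$ one has $x<g(x)<1$. The left inequality is immediate, since for $x>0$ we have $g(x)>x\iff\sqrt{3-2x}>1\iff x<1$. For the right inequality set $\phi(x):=x^{2}(3-2x)=3x^{2}-2x^{3}$; then $\phi'(x)=6x(1-x)>0$ on $(0,1)$ and $\phi(1)=1$, so $\phi(x)<1$ for $x\in[0,1)$ and hence $g(x)=\sqrt{\phi(x)}<1$. Starting from $F_{0}=1/2\in(0,1)$, a trivial induction using this bound shows that $F_{n}\in(0,1)$ and $F_{n-1}<F_{n}$ for every $n$; in particular $\{F_{n}\}$ is strictly increasing — a fortiori non-decreasing — and bounded above by $1$.

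Finally I would identify the limit. By the monotone convergence theorem $\lambda:=\lim_{n\to\infty}F_{n}$ exists and satisfies $F_{1}=1/\sqrt{2}\le\lambda\le1$ (here $F_{1}=\tfrac12\sqrt{3-1}=1/\sqrt2$). Since $g$ is continuous on $[0,1]$, passing to the limit in $F_{n}=F_{n-1}\sqrt{3-2F_{n-1}}$ yields $\lambda=\lambda\sqrt{3-2\lambda}$; because $\lambda\ge 1/\sqrt2>0$ we may divide by $\lambda$ to get $\sqrt{3-2\lambda}=1$, i.e.\ $\lambda=1$. This establishes both claims of the lemma.

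The only step requiring genuine care is the upper bound $F_{n}<1$: it is what simultaneously keeps the ``$g(x)>x$'' inequality available (so the sequence really is increasing) and rules out the sequence overshooting the fixed point, thereby pinning the limit to $1$ rather than to $0$ or to something outside $[0,1]$. That bound is exactly the monotonicity of the cubic $\phi$ on $[0,1]$, so I expect that little calculus observation to be the crux; the remaining manipulations are routine.
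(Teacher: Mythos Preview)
Your proof is correct and follows essentially the same route as the paper: both rewrite the recursion as $F_n=F_{n-1}\sqrt{3-2F_{n-1}}$, use $\sqrt{3-2x}\ge 1$ on $[0,1]$ to get monotonicity, bound the sequence in $[0,1]$, and pass to the limit in the fixed–point equation. Your version is in fact a bit more careful than the paper's: you justify the upper bound $g(x)<1$ via the monotonicity of $\phi(x)=x^2(3-2x)$ rather than declaring it ``straightforward,'' and you explicitly rule out the spurious fixed point $\lambda=0$ by bounding $\lambda\ge F_1=1/\sqrt{2}$, whereas the paper simply asserts that $\lambda=1$ is the only solution.
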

\begin{proof}
Consider the function $f(x)=\sqrt{x^3 + 3(1 - x)x^2}$. Clearly, $f(x) = x\sqrt{3-2x}$. 
It is straightforward to verify that $f(x) \leq 1$, for all $0 \leq x \leq 1$. 
By definition of $F_n$, we have that
$$
F_n = F_{n-1} \sqrt{F_{n-1} + 3(1-F_{n-1})} = F_{n-1} \sqrt{3-2F_{n-1}},
$$
for $n \geq 1$. However, $3-2x \geq 1$, for $0 \leq x \leq 1$. 
Using this and the fact that $F(p) = p\sqrt{3-2p}$ we see that $F(p) \geq p$, 
for all $0 \leq p \leq 1$.
Therefore $F_n \geq F_{n-1}$, for all $n\geq 2$. 
It follows that the sequence $\{ F_n : n \geq 0 \}$ is monotone non-decreasing. 
Consequently, its limit $f := \lim_{n \to + \infty} F_n$ exists.
It is now shown that the limit $f$ must equal one. 
Indeed, consider the defining equation of $F_n$, namely $F_n = F_{n-1} \sqrt{3-2F_{n-1}}$. 
In this equation, passing to the limit as $n \to + \infty$ we see that $f = f \sqrt{3 -2f}$. 
The only solution to this equation is $f=1$.
\end{proof}

Now, let us estimate the convergence speed of the sequence $F_n$. First, consider the formula $F_n = F_{n-1} \sqrt{3-2F_{n-1}}$. 
Let $F_{n-1} = 1 -\epsilon_{n-1}$, for some $\epsilon_{n-1} > 0$. Observe that 
$$
F_n = (1 -\epsilon_{n-1}) \sqrt{3 -2 (1 - \epsilon_{n-1})} = (1-\epsilon_{n-1}) \sqrt{1+ 2\epsilon_{n-1} } .
$$
It can be verified that the inequality
$$
(1-\epsilon_{n-1}) \sqrt{1+ 2\epsilon_{n-1} } > 1 - \epsilon_{n-1} /2
$$ 
is valid as long as $\epsilon_{n-1} < \frac{6 - \sqrt{20}}{8} \approx 0.191 \ldots$ (to see this claim, multiply out and use the resulting quadratic in the variable $\epsilon_{n-1}$). 
Therefore,  we can conclude that $F_n = 1 - \epsilon_n$, where $\epsilon_n \leq \epsilon_{n-1}/ 2$. 
As depicted in Figure~\ref{fig:repetvsconcat}, when starting with fidelity $F_0 > 1/2$, 
after four iterations we can attain $\epsilon_{n_0} \leq 0.1 < \frac{6 - \sqrt{20}}{8}$. 
Therefore, convergence of the fidelity $F_{n}$ to one is exponentially fast, namely $\epsilon_{n} \leq \epsilon_{n_0}/ 2^{n-n_0}$, where $n > n_0$. 
To conclude, we have the following theorem.
\begin{theorem}
If $1/2 < F_0 \leq 1$ then $(3,1)$ repetition error concatenation fidelity $F_n \geq 1 - \epsilon$ can be reached in at most $\left\lceil\log_2 (c /\epsilon)\right\rceil$ steps, where $c>0$ is constant. 
\end{theorem}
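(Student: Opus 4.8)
The plan is to split the iteration $F_n = F_{n-1}\sqrt{3-2F_{n-1}}$ into two phases: a \emph{burn-in} phase of a constant number $n_0$ of steps that lands the sequence in a regime of guaranteed geometric contraction, followed by a geometric phase in which the distance $\epsilon_n := 1 - F_n$ to $1$ at least halves at every step. The target bound $\lceil\log_2(c/\epsilon)\rceil$ will then come out as ``$n_0$ burn-in steps plus $\log_2(1/(10\epsilon))$ contraction steps,'' with $c = 2^{n_0}/10$.

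For the burn-in phase I would first record that the iteration map $g(x) = x\sqrt{3-2x}$ is monotone increasing on $[1/2,1]$, since $g'(x) = 3(1-x)/\sqrt{3-2x}\ge 0$ there, and that it maps $[1/2,1]$ into itself. Consequently a smaller initial value $F_0$ produces a pointwise smaller sequence, so it is enough to treat the worst admissible case $F_0 \downarrow 1/2$. Since $F_n\to 1$ by Lemma~\ref{lm:increasing}, there is a finite, $F_0$-independent integer $n_0$ with $\epsilon_{n_0}\le 1/10$; iterating $g$ four times starting from $1/2$ — exactly the computation already reported just before the theorem — shows $n_0 = 4$ works, and because $F_n$ is non-decreasing the error stays below $1/10$ thereafter.

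For the geometric phase I would invoke the inequality established immediately above the theorem, namely $(1-\epsilon)\sqrt{1+2\epsilon} > 1 - \epsilon/2$ whenever $0 < \epsilon < (6-\sqrt{20})/8 \approx 0.191$. Writing $F_n = 1-\epsilon_n$, the recursion reads $F_n = (1-\epsilon_{n-1})\sqrt{1+2\epsilon_{n-1}}$, so as soon as $\epsilon_{n-1} \le 1/10 < (6-\sqrt{20})/8$ we get $\epsilon_n \le \epsilon_{n-1}/2$; since the errors only shrink, this hypothesis persists at every later step, and a one-line induction gives $\epsilon_{n_0+j} \le \epsilon_{n_0}\,2^{-j} \le 2^{-j}/10$ for all $j\ge 0$. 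To finish, set $c := 2^{n_0}/10$. To secure $F_n\ge 1-\epsilon$ it suffices to choose $j$ with $2^{-j}/10 \le \epsilon$, i.e. $j \ge \log_2\!\bigl(1/(10\epsilon)\bigr)$, so $n = n_0 + \lceil\log_2(1/(10\epsilon))\rceil = \lceil\log_2(c/\epsilon)\rceil$ steps are enough (when this count is nonpositive the claim is vacuous, as $F_0 > 1/2 \ge 1-\epsilon$ already).

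The main obstacle is the burn-in phase: one must argue that a \emph{fixed}, $F_0$-independent number of iterations suffices to reach the contraction regime $\epsilon \le 1/10$. This is precisely where monotonicity of $g$ in its argument is needed, to reduce the whole family $F_0\in(1/2,1]$ to the single worst case $F_0 = 1/2$ and hence to a finite explicit check; once that reduction is in place, the rest is just the geometric estimate already available in the text. (If one instead wants the Bell-pair fidelity $\mathcal{F}_n = F_n^2$, note $\mathcal{F}_n \ge 1 - 2\epsilon_n$, which only doubles the constant $c$.)
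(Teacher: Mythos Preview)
Your proposal is correct and follows essentially the same two-phase structure as the paper: a constant-length burn-in to reach $\epsilon \le 0.1$, followed by the geometric halving $\epsilon_n \le \epsilon_{n-1}/2$ obtained from the inequality $(1-\epsilon)\sqrt{1+2\epsilon} > 1-\epsilon/2$. Your treatment of the burn-in is in fact slightly more careful than the paper's: you justify uniformity in $F_0$ by observing that $g(x)=x\sqrt{3-2x}$ is monotone on $[1/2,1]$ and reducing to the extremal case $F_0=1/2$, whereas the paper simply appeals to a numerical plot to assert that four iterations suffice.
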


\begin{theorem}[Concatenated error correction]
\label{thm:concatenated-error-correction}
Assume that a one-time execution of error correction requires $m$ ancillary qubits at every subpath's endpoint.
For a path of length $\ell$, a repeater requires at most $m^{n + \ell - 1}$ qubits of memory to complete $n$ error correction concatenations, $\ell$, $m$ and $n$ are positive integers, $\ell, n \ge 1$ and $m \ge 3$.
\end{theorem}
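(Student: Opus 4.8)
The plan is to follow, step for step, the proof of Theorem~\ref{thm:repeated-fidelity-improvement} for repeated purification, with the base factor $2$ (the number of raw Bell pairs a purification round consumes) replaced by the code‑block size $m$ (the number of qubits a single round of error correction occupies at an endpoint; for the $(3,1)$ repetition code of Figure~\ref{fig:error-correction-circuit-3} one has $m=3$). As there, I would first settle the unconcatenated case $n=1$ by induction on the path length $\ell$, and then lift the result to arbitrary $n$ by iterating the argument over the $n$ concatenation levels.

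\textbf{Case $n=1$, induction on $\ell$.} Base case $\ell=1$: the path is a single link $s$--$d$; the endpoints create a Bell pair by direct communication and each encodes its half into one code block, using $m$ qubits per endpoint. Since $m=m^{1}=m^{\,n+\ell-1}$ when $n=\ell=1$, the bound holds. Inductive step $\ell>1$: pick a repeater $r$ that splits the path into a left subpath of length $\ell_1\ge 1$ and a right subpath of length $\ell_2\ge 1$ with $\ell_1+\ell_2=\ell$ --- the decomposition already used for nested entanglement swapping, now with $r$ entangling remote code blocks rather than single qubits (Figure~\ref{fig:swapping-error-correction-circuit}). Acting as the endpoint of the left subpath, $r$ must hold the code block produced over $\ell_1$ links, which by the inductive hypothesis costs at most $m^{\ell_1}$ qubits at the busiest node of that sub‑construction, $r$ included; acting as the endpoint of the right subpath it holds at most $m^{\ell_2}$ qubits; and the measurement that realizes the swap operates on the two blocks $r$ already stores, requiring nothing further. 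Hence $r$ uses at most $m^{\ell_1}+m^{\ell_2}$ qubits. Because $1\le\ell_i\le\ell-1$, each term is at most $m^{\ell-1}$, so the total is at most $2\,m^{\ell-1}$; and since $m\ge 3$ (in particular $m\ge 2$) this is at most $m\cdot m^{\ell-1}=m^{\ell}=m^{\,n+\ell-1}$. This closes the case $n=1$.

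\textbf{General $n$.} The behavior differs from purification in one respect: a further purification round needs two copies of the already‑purified pair (factor $2$ per round), whereas a further concatenation level re‑encodes \emph{each} of the $m$ qubits of the current block into its own $m$‑qubit block (Figure~\ref{fig:swapping-error-correction-repetition}), so one extra concatenation multiplies every node's qubit count by $m$. Applying this $n-1$ times to the $n=1$ bound $m^{\ell}$ yields $m^{\,n-1}\cdot m^{\ell}=m^{\,n+\ell-1}$, as claimed. The fidelity produced by these $n$ levels is governed by Lemma~\ref{lm:efficacy} and Definition~\ref{def:repcodeconcatenation}, but only the memory count enters the present statement.

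\textbf{Expected main obstacle.} The delicate point is the bookkeeping at the splitting repeater: one must argue that fusing the two subpath code blocks is \emph{additive}, giving the recursion $m^{\ell_1}+m^{\ell_2}$ with no purification‑style doubling and no separate swap ancillas, and then notice that additivity together with $\ell_1,\ell_2\le\ell-1$ and $m\ge 2$ is exactly what makes the bound close at $m^{\ell}$ --- this (with a bit of slack for $r$'s own constant swap overhead) is what the hypothesis $m\ge 3$ provides. A smaller but necessary care point is fixing the meaning of ``$m$ ancillary qubits at every subpath's endpoint'' to be the number of qubits one code block occupies there (e.g.\ $m=3$ for $(3,1)$ repetition), so that the base case lands on $m^{1}$ exactly rather than on $m+1$.
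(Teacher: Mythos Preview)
Your proposal is correct and follows essentially the same approach as the paper: induction on the path length $\ell$ for the case $n=1$ with the additive recursion $m^{\ell_1}+m^{\ell_2}\le 2m^{\ell-1}\le m^{\ell}$, followed by the observation that each additional concatenation level multiplies the qubit count by $m$, yielding $m^{n-1}\cdot m^{\ell}=m^{\,n+\ell-1}$. The only minor divergence is that the paper emphasizes that at the splitting repeater $r$ the swap itself consumes just the two single Bell-pair qubits (Figure~\ref{fig:swapping-error-correction-circuit}), with the $m^{\ell_i}$ cost arising from constructing each subpath pair, whereas you phrase this as $r$ holding the code blocks of both subpaths; the arithmetic and conclusion are identical.
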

\begin{proof} 
The proof is by induction on the path's length $\ell$.
Firstly, let us assume that $n$ is equal to one, 
i.e., one-time execution of error correction.

\noindent
{\em Base Case: }($\ell = 1$).
The path consists of a source $s$ and a destination $d$, connected by a link.
The nodes $s$ and $d$ are terminals or repeaters.
Using direct communications,
a $m$ qubit block possessed by $s$ entangled with a $m$ qubit block possessed by $d$ are established, as depicted in Figure~\ref{fig:error-correction-circuit-3}. 
Using error correction, they are decoded into a single qubit pair shared between $s$ and $d$. 
The endpoints $s$ and $d$ require $m$ qubits, 
which is equal to $m^{\ell}$, for $\ell = 1$ and $m \ge 3$.

\noindent
{\em Inductive Step:} ($\ell > 1$).
Let $\ell_1$ and $\ell_2$ be the number of links to the left and right of repeater $r$ on a path of length $\ell$ between nodes $s$ and $d$, 
with $\ell = \ell_1 + \ell_2$.
Both $\ell_1$ and $\ell_2$ are greater than or equal to one.
The nodes $s$ and $d$ can be either terminals or repeaters.
Let us assume the nodes $s$ and $r$ (respectively, $r$ and $d$) share a Bell pair. 
As depicted in Figure~\ref{fig:swapping-error-correction-circuit},
using entanglement swapping
the repeater $r$ establishes entanglement between a $m$ qubit block possessed by $s$ and a $m$ qubit block possessed by $d$. 
Using error correction, they are decoded  into a single qubit pair shared between $s$ and $d$. 
The repeater $r$ uses one qubit in each Bell pair, i.e., two qubits.
Every node $s$ and $d$ used $m$ ancillary qubits.
To create the Bell pair between $s$ and $r$ (respectively, $r$ and $d$) it requires at most $m^{\ell_1}$ (respectively, $m^{\ell_2}$) qubits,
for a total of 
$m^{\ell_1} + m^{\ell_2}$ quibits.
Because $\ell_1, \ell_2$ are greater than or equal to one but lower than $\ell$,
it is lower than or equal to $m^{\ell-1} + m^{\ell-1}$.
Which is lower than or equal to $m^{\ell}$ qubits.

This proves the theorem when $n$ is equal to one. 
Observe that the general statement of the theorem regarding the number $n$ of concatenations (as depicted in Figure~\ref{fig:swapping-error-correction-repetition}) follows immediately by applying $n$ iterations of the above argument.
Since every additional concatenation of the procedure by repeater $r$ multiplies the number of qubits by $m$, we have that
$k$ concatenations would need $m^{n-1} \cdot m^{\ell}$, or
$m^{n + \ell - 1}$ qubits.
\end{proof}

\begin{definition}[Error correction procedure]
\label{def:correction-procedure}
For a $(3,1)$ error correction code,
involving three nodes, where the middle one is a repeater $r$ while the two others $s$ and $d$ are terminals of repeaters, the error correction procedure consists of four operations, namely, the generation of two Bell pairs (between $s$ to $r$ and between $r$ to $d$), one entanglement swapping and one error correction operation performed by $s$ and $d$, at their location.
\end{definition}

\begin{corollary}
\label{cor:ECCOps}
Concatenation of error correction by a repeater for a path of length $\ell$ requires at most $4(\ell - 1)$ operations. 
\end{corollary}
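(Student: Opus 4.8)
The plan is to read off the bound from Definition~\ref{def:correction-procedure} together with the recursive construction in the proof of Theorem~\ref{thm:concatenated-error-correction}, exactly parallel to how the corresponding purification corollary follows from Definition~\ref{def:purification-procedure} and Theorem~\ref{thm:repeated-fidelity-improvement}. First I would fix the decomposition of the end-to-end protocol on a path $r_0 = s, r_1, \ldots, r_\ell = d$. Such a path has $\ell$ edges and $\ell - 1$ intermediate repeaters. By the same induction on $\ell$ used in the proof of Theorem~\ref{thm:concatenated-error-correction} (and the nested-swapping scheme described earlier), the encoded end-to-end pair between $s$ and $d$ is produced by invoking the three-node error correction procedure of Definition~\ref{def:correction-procedure} once per intermediate repeater; at the base case $\ell = 1$ there is no intermediate repeater, so no such invocation is needed, matching $4(\ell - 1) = 0$.

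Next I would charge operations. Definition~\ref{def:correction-procedure} states that each three-node invocation (repeater $r$ in the middle, terminals or repeaters $s$ and $d$ at the ends) costs exactly four operations: generation of the two Bell pairs (one between $s$ and $r$, one between $r$ and $d$), one entanglement swapping at $r$, and one error correction performed by $s$ and $d$ at their locations. Summing this cost over the $\ell - 1$ intermediate repeaters yields at most $4(\ell - 1)$ operations, which is the claimed bound.

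The step I expect to be the crux --- and the one that distinguishes this corollary from the purification corollary, whose bound carries a factor $n$ --- is justifying that the number of concatenations $n$ does not enter the count. Repeated purification consumes fresh Bell pairs and a fresh swapping on each repetition, so its cost scales with $n$; concatenated error correction does not re-run the protocol. The $n$ recursive levels of encoding and decoding shown in Figure~\ref{fig:swapping-error-correction-repetition} are absorbed into the single ``error correction operation'' that each endpoint applies to its block, and by Theorem~\ref{thm:concatenated-error-correction} increasing $n$ only enlarges that block (to $m^{n+\ell-1}$ qubits) without adding Bell-pair generations or swapping steps along the path. I would therefore make explicit the convention that ``operation'' here counts one of the four coarse-grained steps of Definition~\ref{def:correction-procedure}, not the internal gates of the recursive encoder; once that is granted, the per-repeater cost stays at four for every $n$ and the bound $4(\ell - 1)$ follows immediately.
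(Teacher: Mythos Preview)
Your proposal is correct and matches the paper's intended approach: the paper states this corollary without proof, but the parallel purification corollary is proved in one line by citing nested entanglement, the proof of Theorem~\ref{thm:repeated-fidelity-improvement}, and Definition~\ref{def:purification-procedure}, and your argument is precisely the analogous derivation from Definition~\ref{def:correction-procedure} and Theorem~\ref{thm:concatenated-error-correction}. Your explicit discussion of why the concatenation level $n$ does not enter the operation count (it inflates block size, not the number of Bell-pair generations or swaps) is exactly the point the paper makes in the remark immediately following the corollary.
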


Notice, from Corollary~\ref{cor:ECCOps}, that concatenated error correction, compared to repeated purification, only introduces a constant number of  ancillary qubits per node, independently from the number of iterations, i.e., the number of operations is proportional to the path length but independent of the number of concatenations.

\section{Analytical and Simulation Results}\label{sec:generalgraphs}

\begin{figure*}[!b]
	\subfigure[Repeated purification.]{
	    \includegraphics[width=6.25cm]{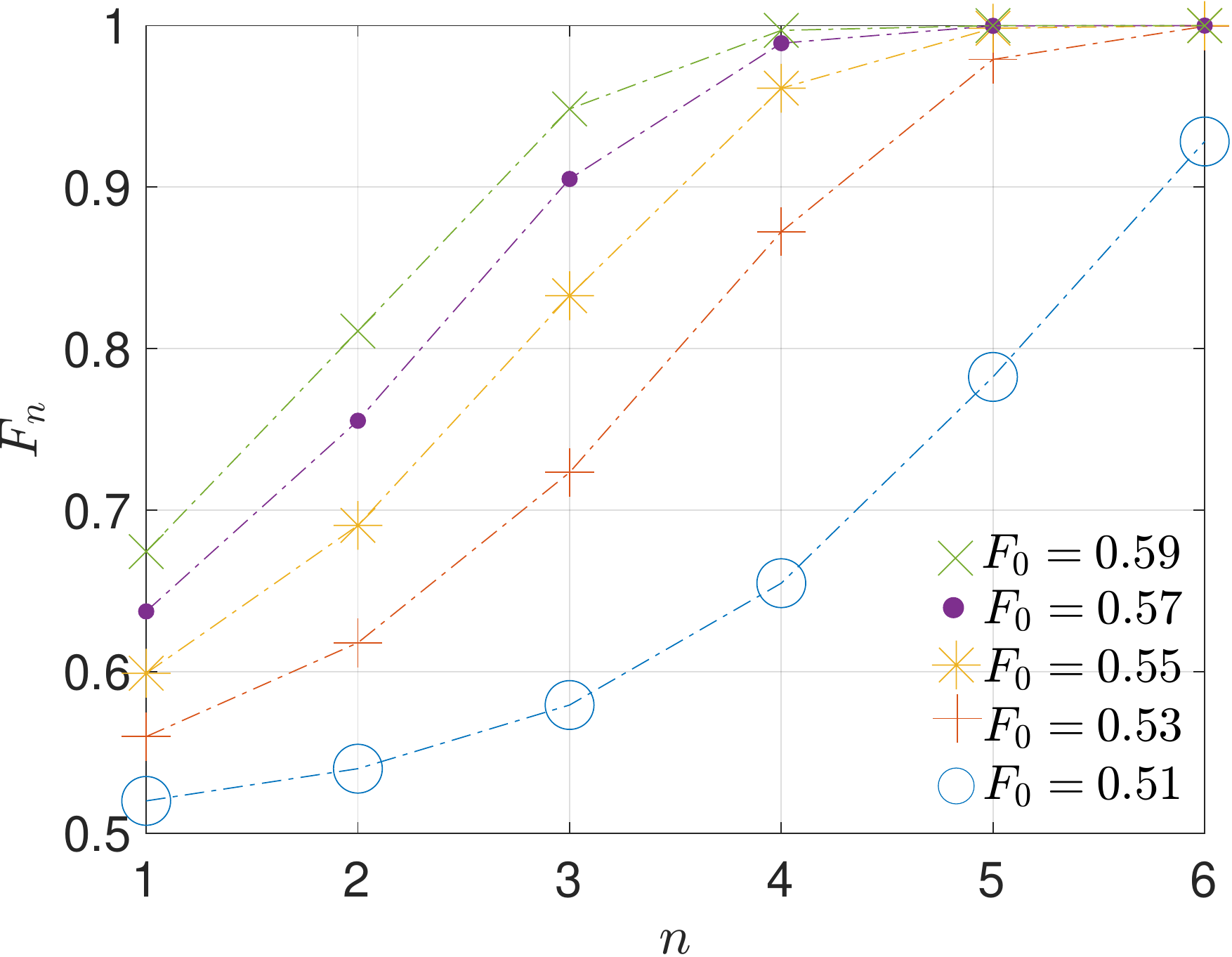}
	}
	\subfigure[Concatenated error correction, $(3,1)$ repetition code.]{
	    \includegraphics[width=6.25cm]{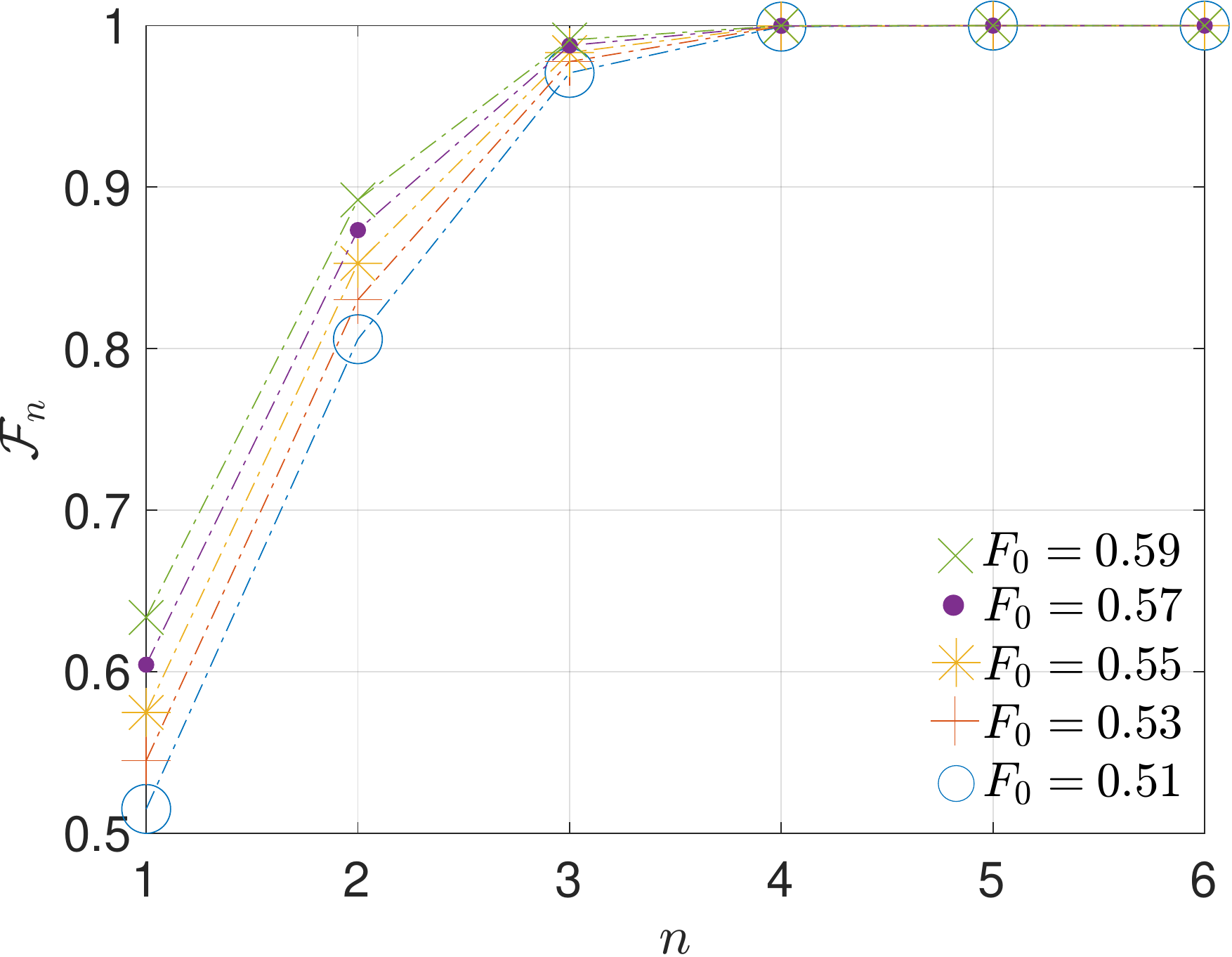}
	}
	\caption{Fidelity ($F_n$ or $\mathcal{F}_n$) vs. number of repetitions or concatenations ($n$), for initial fidelity values $0.51, 0.53, 0.55, 0.57$, and $0.59$. We can observe that $\mathcal{F}_n$ tends to value one faster than $F_n$, for all initial fidelity values $F_0$.}
	\label{fig:repetvsconcat}
\end{figure*}

Figure~\ref{fig:repetvsconcat}~(a) shows the evolution of the fidelity sequence $F_n$ versus the number of repeated purifications $n$, 
for initial fidelity $F_0$ values $0.51, 0.53, 0.55, 0.57$ and $0.59$ (Equation~\ref{eq:purent4}). 
The plot shows that $F_n$ rapidly tends to value one, 
with initial fidelity greater than $0.5$.
In Figure~\ref{fig:repetvsconcat}~(b), we have the same type of graph but for concatenated error correction, with the $(3,1)$ repetition code (Equation~\ref{eq:eq:error-correction-pair-fidelity}).
$\mathcal{F}_n$ tends to value one faster than $F_n$.

\begin{figure*}[!t]
	\subfigure[Repeated purification.]{
    \includegraphics[width=6.25cm]{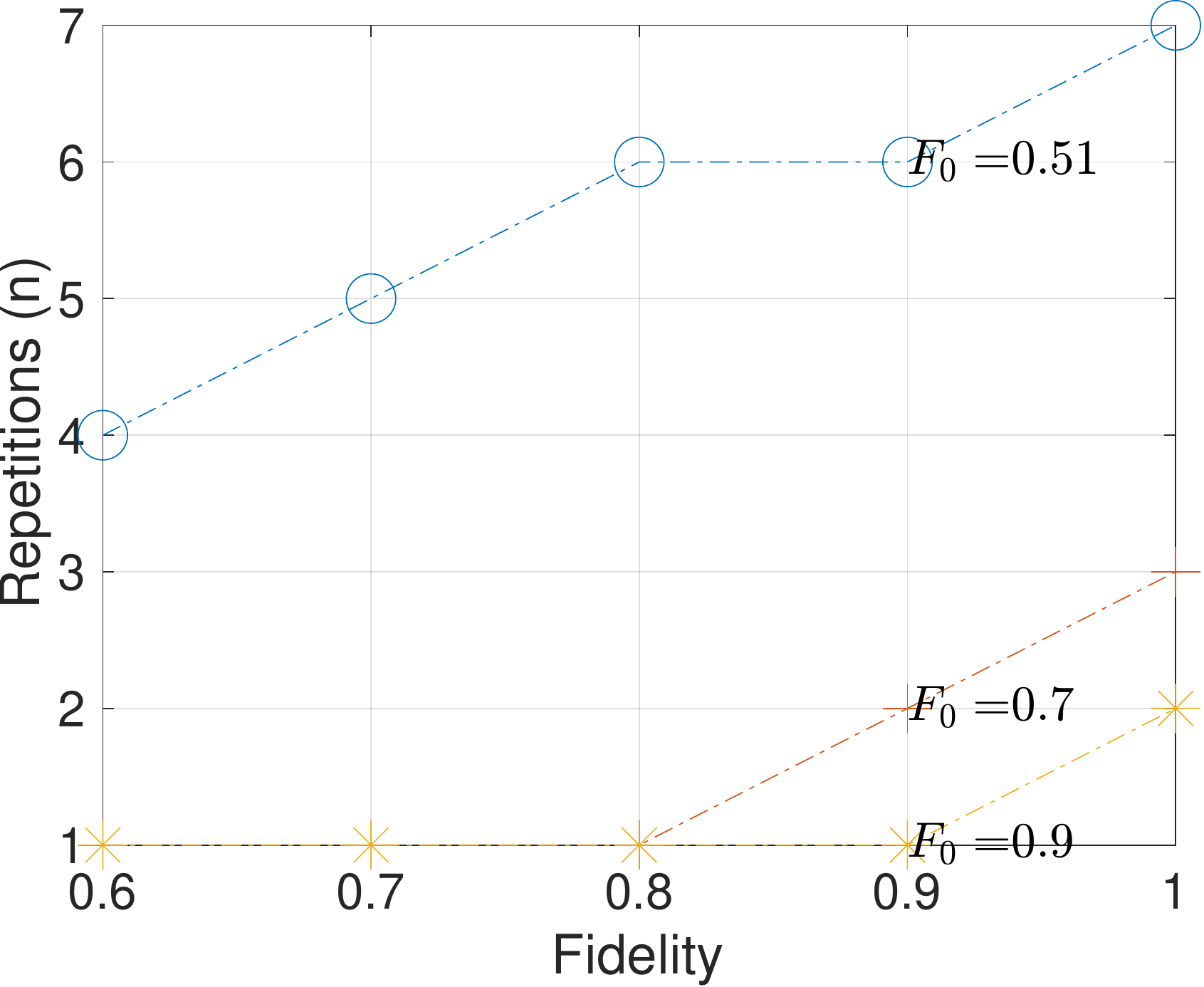}
    }
    \subfigure[Concatenated error correction.]{
    \includegraphics[width=6.25cm]{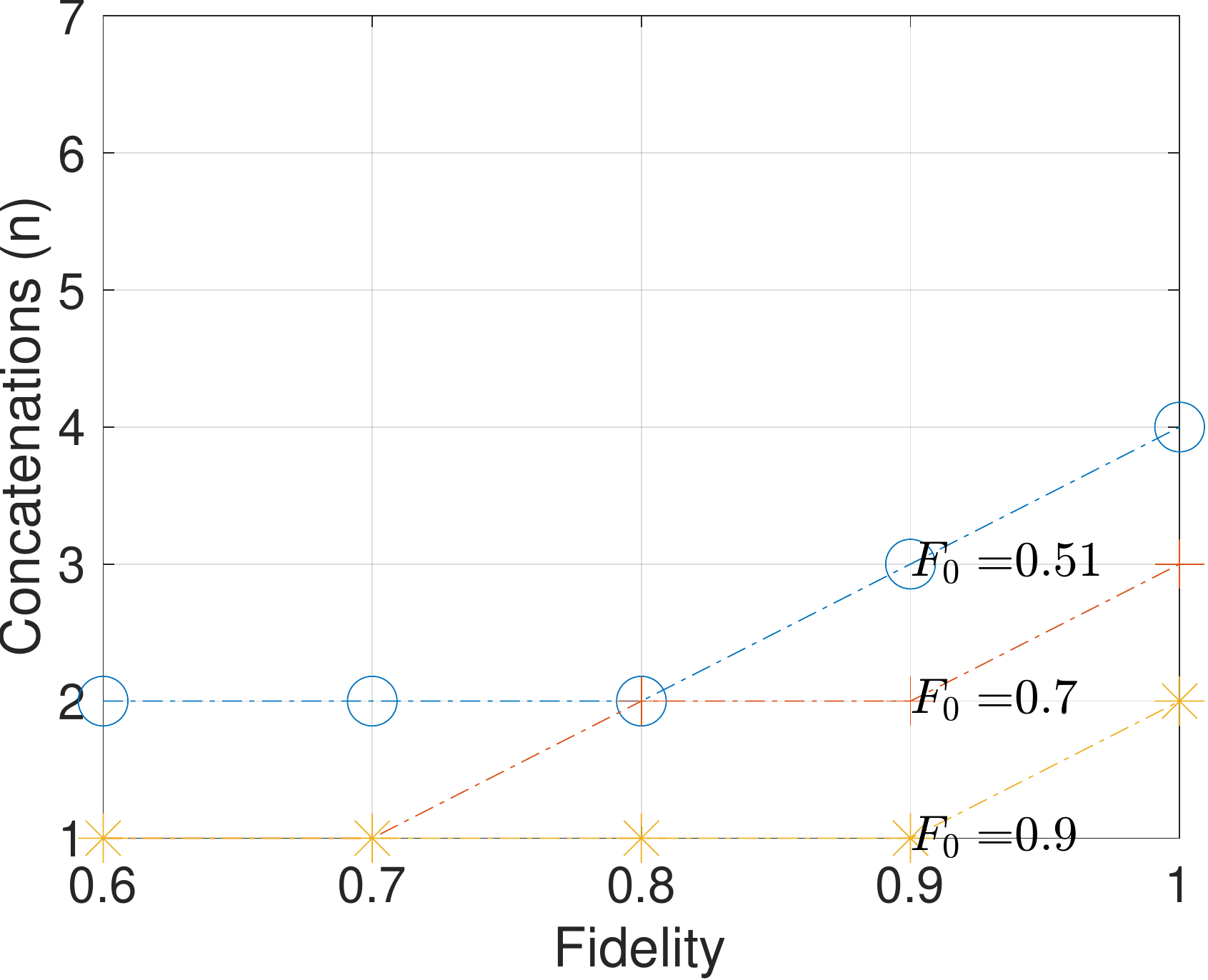}
    }
    \subfigure[Repeated purification.]{
    \includegraphics[width=6.25cm]{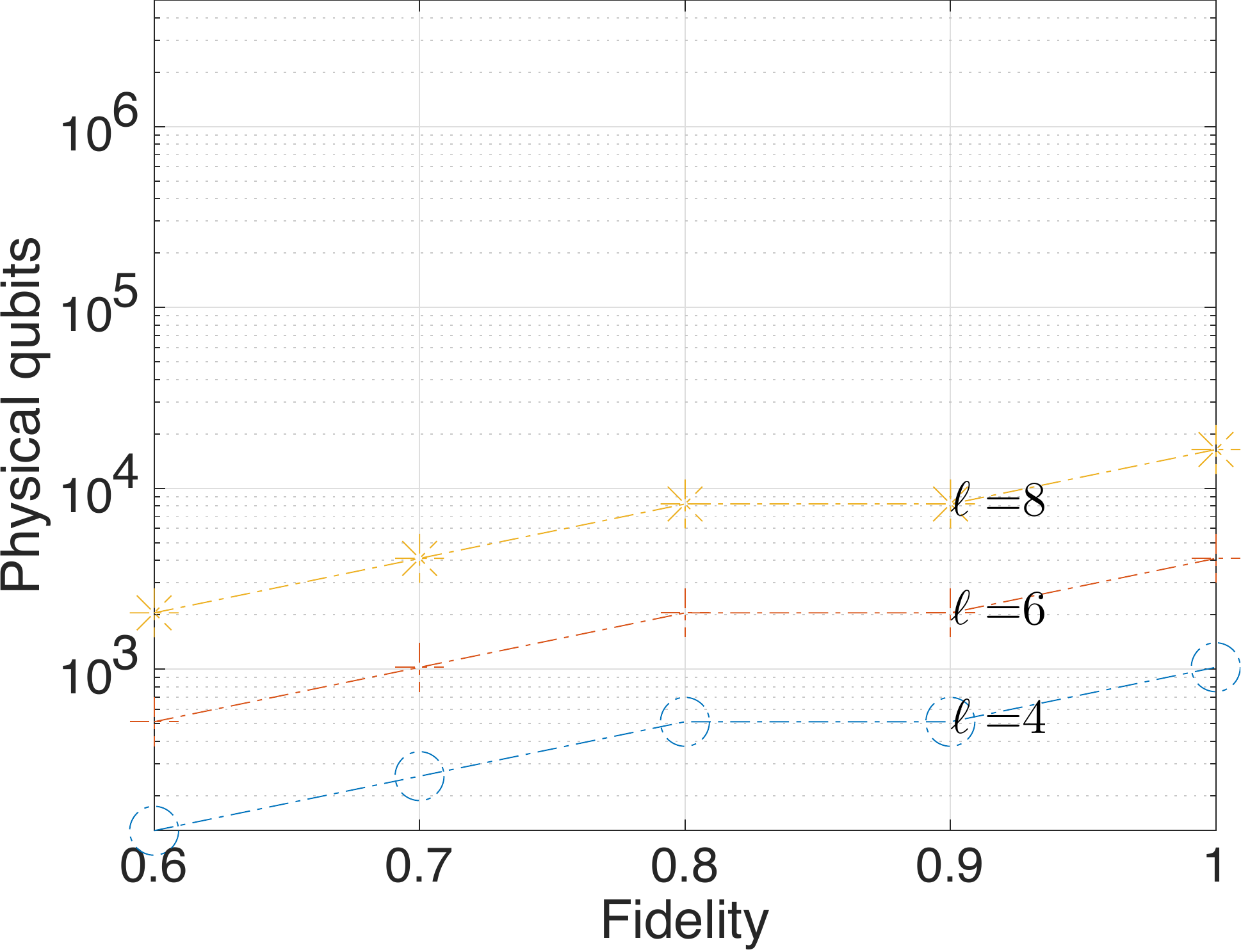}
    }
    \subfigure[Concatenated error correction.]{
    \includegraphics[width=6.25cm]{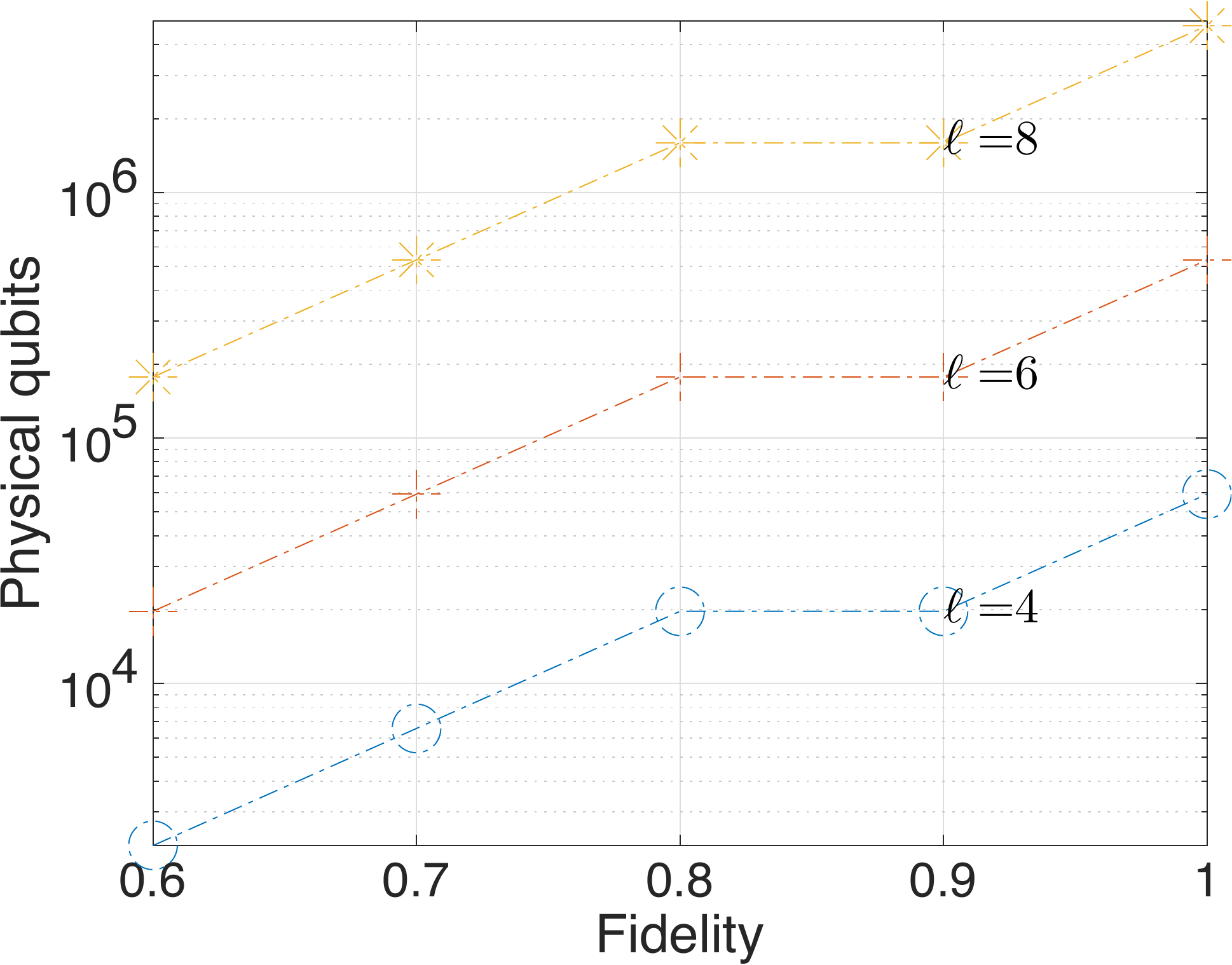}
    }
    \subfigure[Repeated purification.]{
    \includegraphics[width=6.25cm]{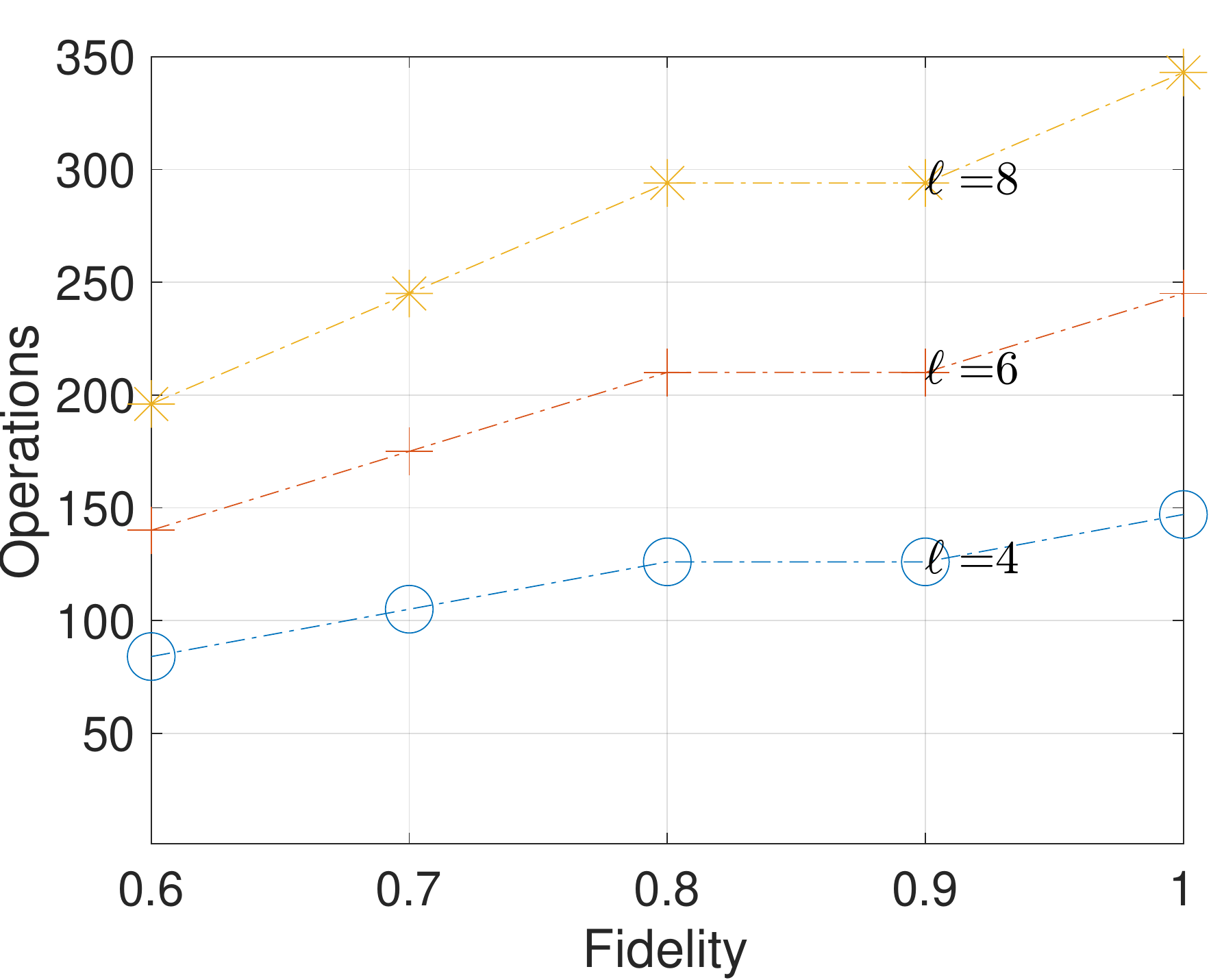}
    }
    \subfigure[Concatenated error correction.]{
    \includegraphics[width=6.25cm]{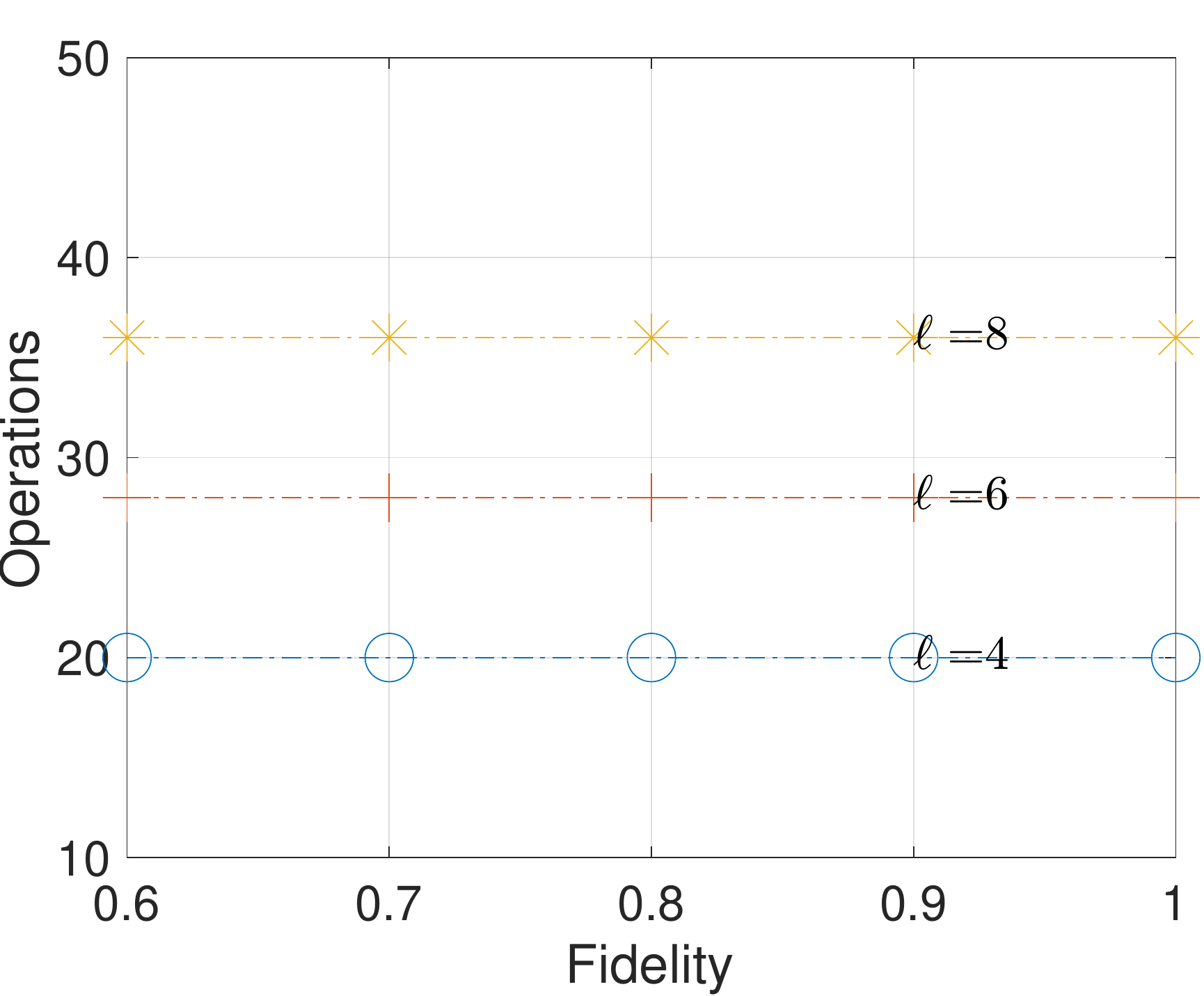}
    }
    \caption{(a,b) Number of repetitions and concatenations ($n$) versus fidelity ($F_n$ or $\mathcal{F}_n$). (c,d) Number of required qubits versus achieved fidelity ($F_n$ or $\mathcal{F}_n$). (e,f) Number of required operations versus achieved fidelity ($F_n$ or $\mathcal{F}_n$).}
	\label{fig:nvsFnrepetvsconcat}
\end{figure*}

Figures~\ref{fig:nvsFnrepetvsconcat}~(a,b) plot the value of $n$, i.e., the number of repetitions and concatenations required to achieve a given fidelity, $F_n$ or $\mathcal{F}_n$.
The initial fidelity ($F_0$) is $0.51, 0.75$, and $0.9$.
There are data points for repeated purification (a) and concatenated error correction concatenation for the $(3,1)$ repetition code (b).
In the $0.51$ case, with low initial fidelity,
error correction requires fewer concatenations than purification repetitions to achieve a given fidelity.
They are almost the same in the $0.7$ case.
They are the same in the $0.9$ case. Figures~\ref{fig:nvsFnrepetvsconcat}~(c,d) plot the numbers of qubits needed to reach a given fidelity ($F_n$ or $\mathcal{F}_n$).
The initial fidelity ($F_0$) is $0.51$.
The path length ($\ell$) is 4, 6 or 8~hops.
Repeated purification achieves near 100\% fidelity using fewer physical qubits than concatenated error correction. Figures~\ref{fig:nvsFnrepetvsconcat}~(e,f) plot the numbers of operations needed to reach a given fidelity ($F_n$ or $\mathcal{F}_n$).
The initial fidelity ($F_0$) is $0.51$.
The path length ($\ell$) is 4, 6 or 8~hops.
Error correction achieves near 100\% fidelity using fewer operations than repeated purification.

\begin{figure}[!hptb]
\begin{center}
\includegraphics[width=10cm]{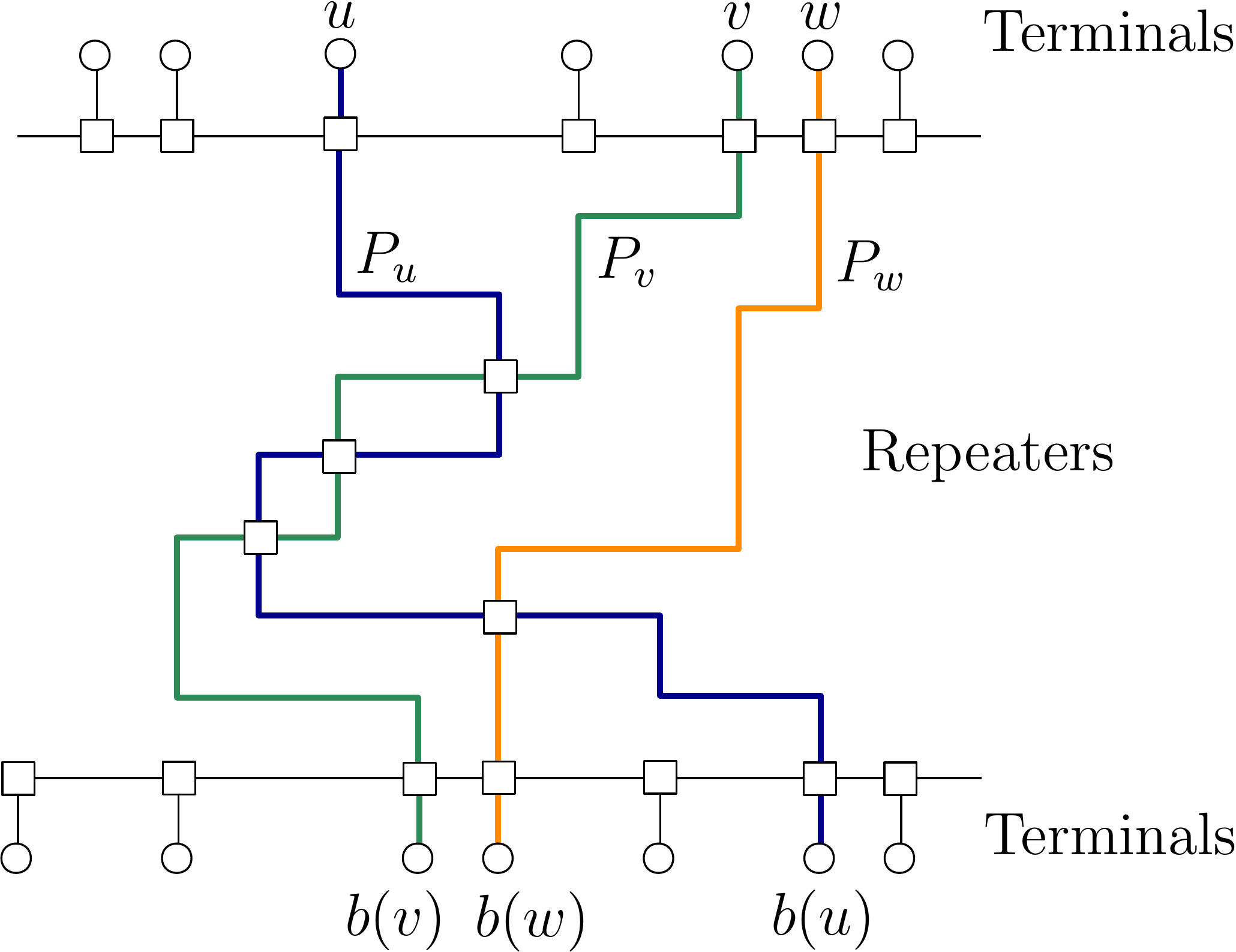}
\caption{A sparse grid topology (not depicted) with terminals at the top and bottom and repeaters in between. Repeaters are depicted with $\Box$ and terminals with {\scriptsize $\bigcirc$}. Terminals at the top and bottom are numbered $1,2,\ldots ,n$, respectively. A possible arrangement of three paths $P_u, P_v, P_w$ of active terminals $u<v<w$ in a grid topology of repeaters such that $b(v) \geq b(w) \geq b(u)$.}
\label{default2}
\end{center}
\end{figure}

Let us now consider a $k$ by $k$ sparse grid topology with $k$ terminals at the top and $k$ terminals at the bottom rows (see~Figure~\ref{default2}). The goal is to determine the memory requirements and operational complexity of the most congested repeaters in the network. We do not assume full connectivity among terminals. 
Instead, with probability $p$, a terminal $u$ at the top of the grid becomes active and intends to communicate to some terminal at the bottom row of the grid. Therefore in expectation, $pk$ such terminals become active. Each terminal $u$ at the top selects a terminal $b(u)$ among the terminals in the bottom row. For each terminal $u$ at the top, the terminal $b(u)$ is chosen randomly among the $k$ terminals at the bottom. Moreover, the choice of any two terminals $u,v$ are independent of each other. For each terminal $u$, consider a path $P_u$ of repeaters connecting terminals $u$ and $b(u)$.  The path $P_u$ may be chosen by any standard procedure considering the grid topology of active repeaters using Dijkstra's algorithm.
It does not have to have optimal length. 
We are interested in the expected number of crossings  (see~Figure~\ref{default2}).  
The actual number of crossings may well depend on the paths $\{ P_u: 1 \leq u \leq k \}$ selected. 

The following argument gives a lower bound on the expected number of crossings. By the previous discussion, the expected number of  active terminals at the top row is $pk$. We say that the order of a pair of terminal $\{ u, v\}$ at the top is reversed if $u<v$ and $b(u) \geq b(v)$. It is clear that, regardless of the choice of paths $P_u, P_v$, there is a crossing between them provided there is a reversal, i.e., $u<v$ and $b(u) \geq b(v)$. 
Note that at every crossing, between $P_u, P_v$, at least one repeater (possibly more) must serve both paths. Observe that if $u < v$ then we have that 
\begin{align*}
\Pr [P_u \mbox{ crosses } P_v ] 
&\geq \sum_{i \geq j} \Pr [b(u) = i ~\&~ b(v) = j]\\
&= \sum_{i \geq j} \Pr [b(u) = i] \cdot \Pr [ b(v) = i] \\
&= \sum_{i \geq j} \frac 1{k^2} = \frac{k^2+k}{2k^2} = \frac 12 + \frac 1{2k}.
\end{align*}

\begin{figure}[!b]
\centering
\subfigure[Repeated purification]{
\includegraphics[width=6cm]{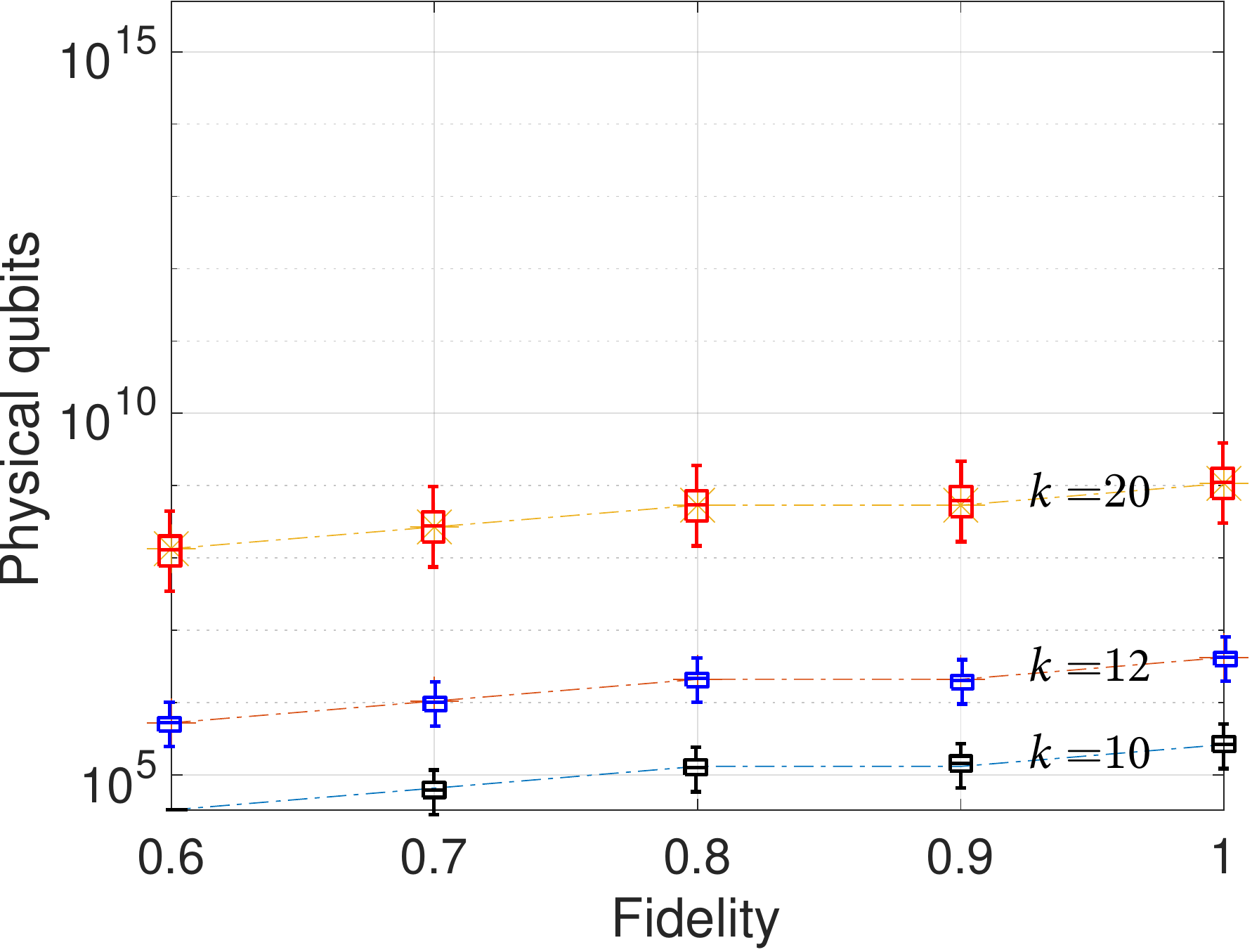}
}
\subfigure[Concatenated error correction]{
\includegraphics[width=6cm]{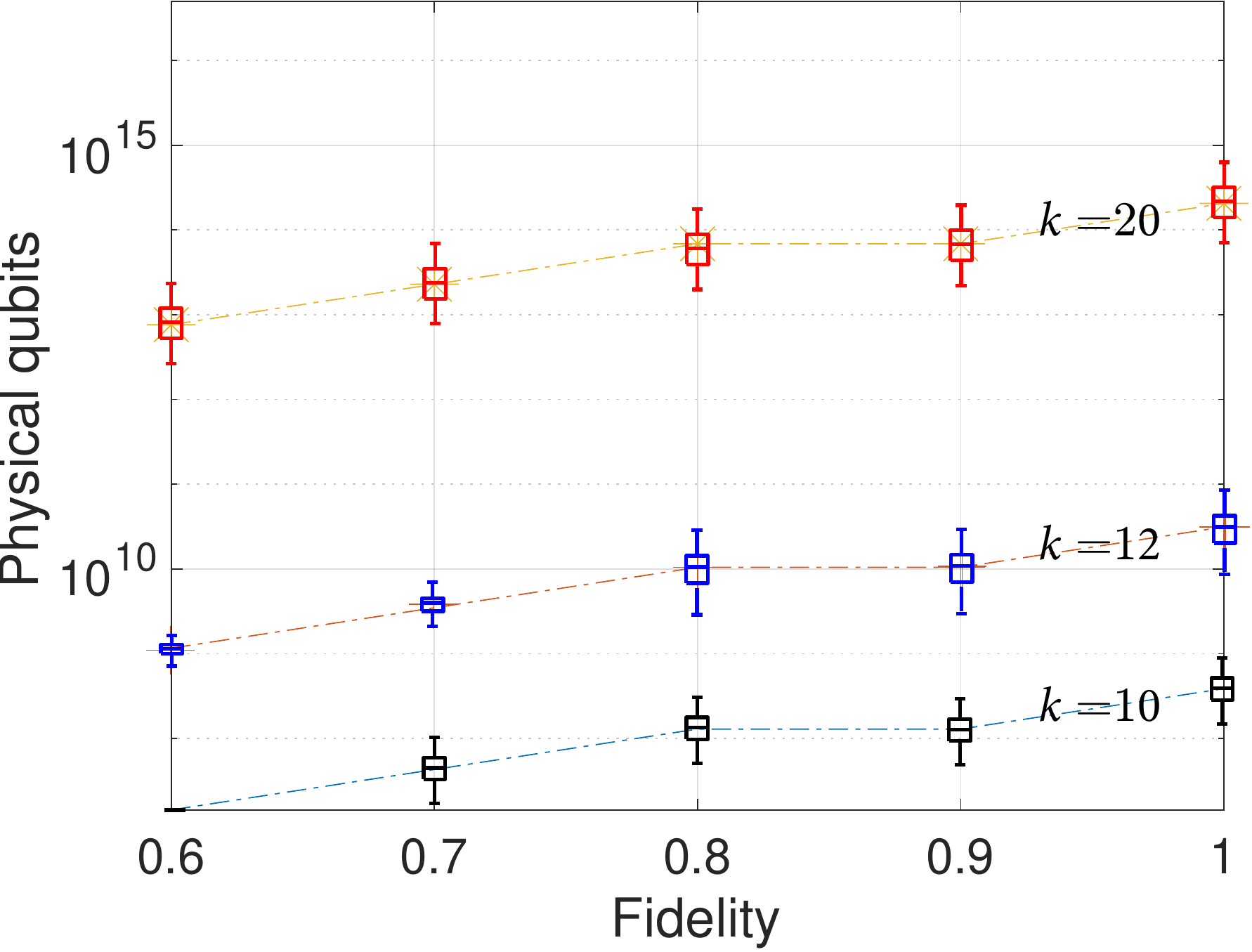}
}
\subfigure[Repeated purification]{
\includegraphics[width=6cm]{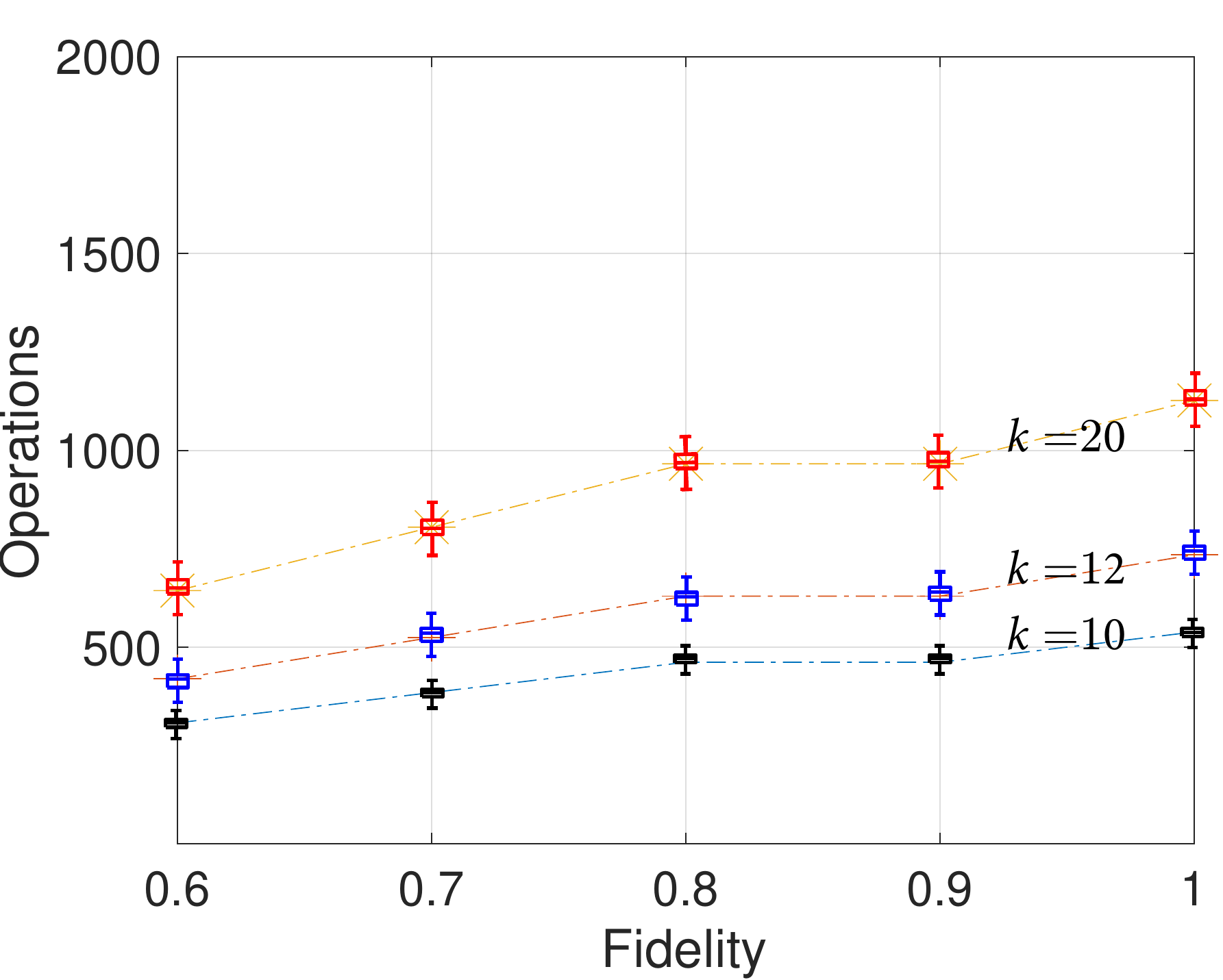}
}
\subfigure[Concatenated error correction]{
\includegraphics[width=6cm]{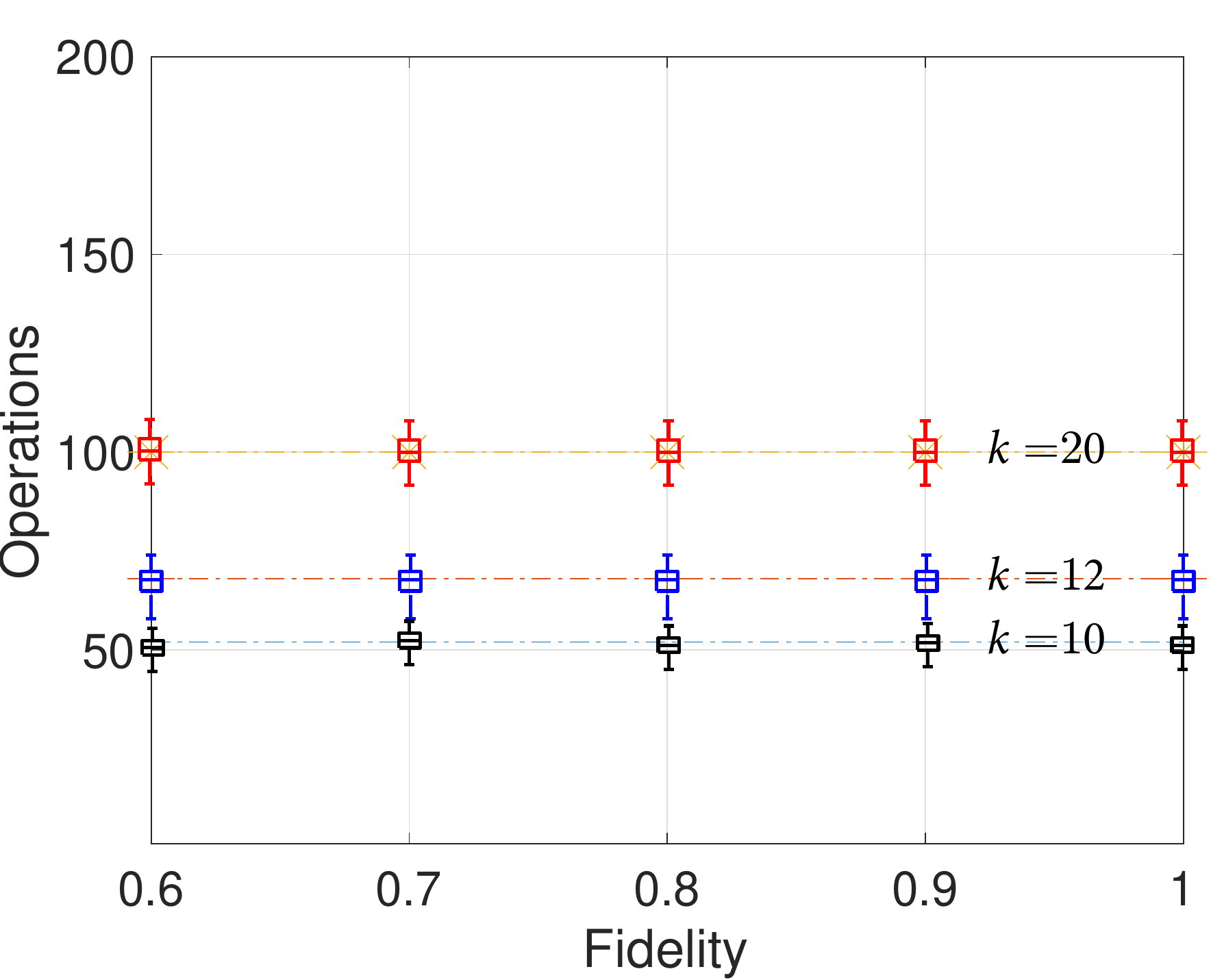}
}
\caption{Extended results using the sparse grid $k$ by $k$ scenario, in which $10 \leq k \leq 20$. (a,b) Concatenated error correction consumes more physical qubits. (c,d) Repeated purification needs more operations.\label{fig:pythonResults2}}
\end{figure}

If we assume that the random variable $I_{uv}$ indicates that the path $P_u$ crosses the path $P_v$, 
then it follows that the expectation of the random variable ${\mathcal C}$ which counts the total number of crossings must satisfy
\begin{equation}
\label{eq:cross}
E[{\mathcal C}] \geq (pk)^2  \left( \frac 12 + \frac 1{2k} \right) .
\end{equation}

\begin{figure}[!b]
\centering
\subfigure[Repeated purification]{
\includegraphics[width=6cm]{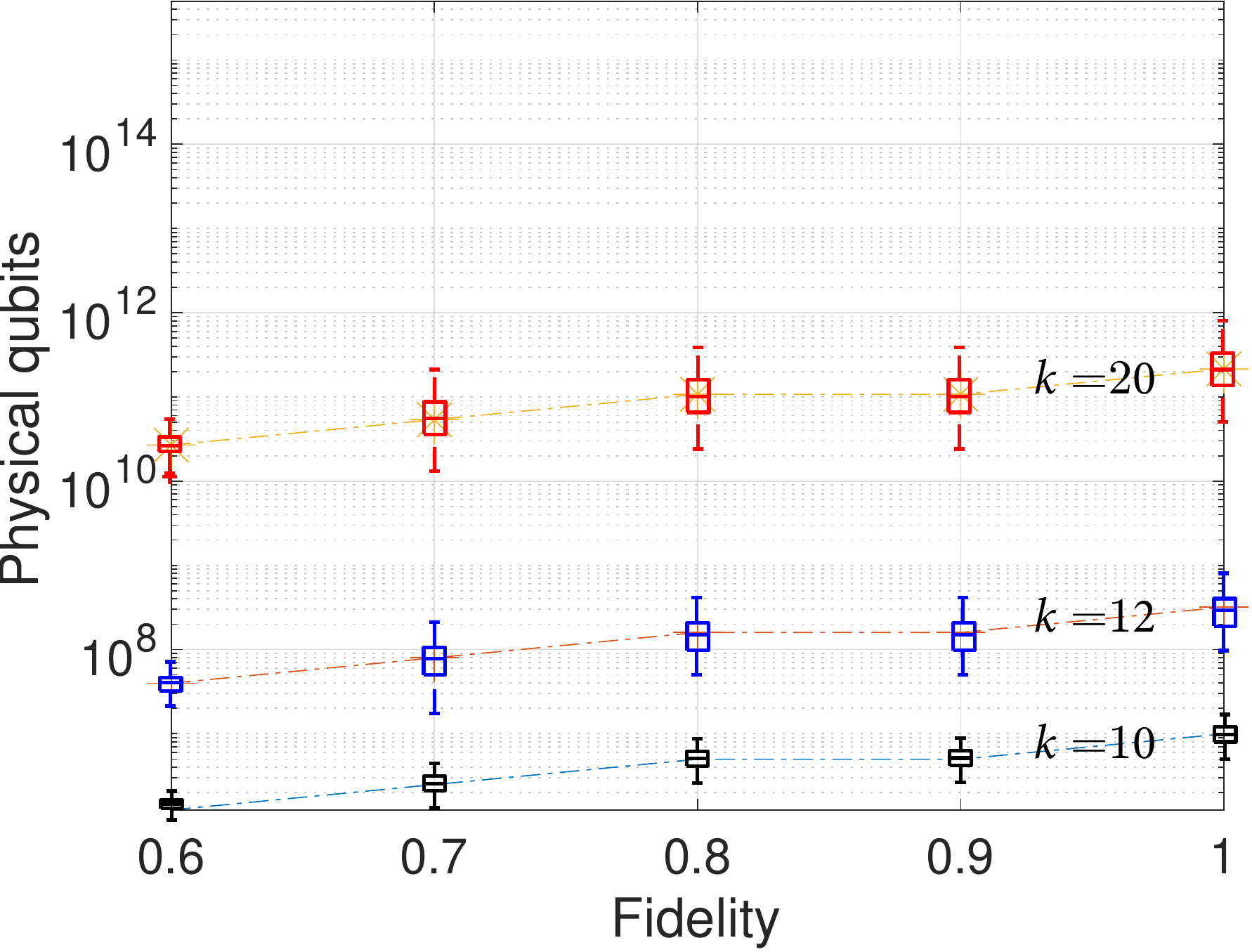}
}
\subfigure[Concatenated error correction]{
\includegraphics[width=6cm]{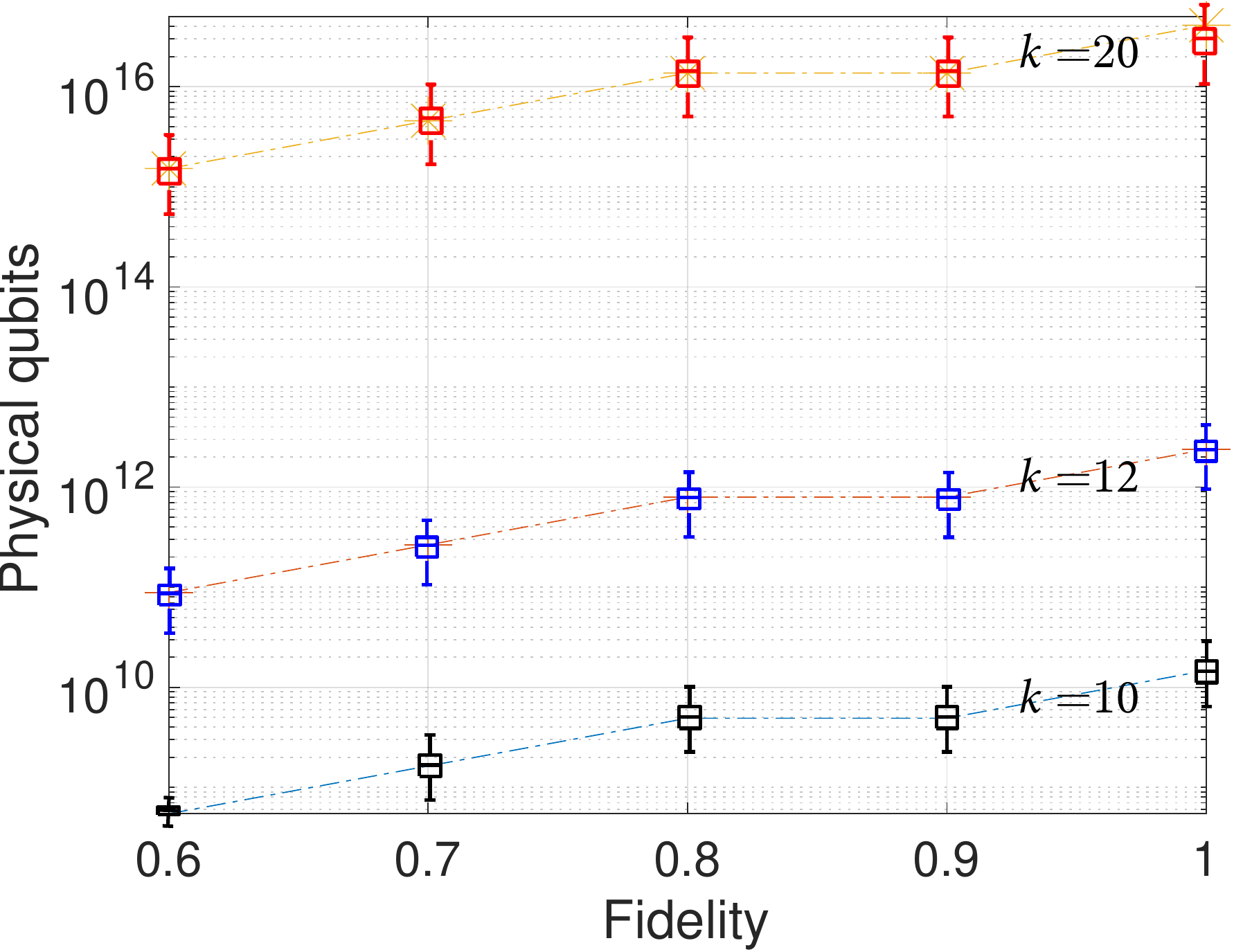}
}
\subfigure[Repeated purification]{
\includegraphics[width=6cm]{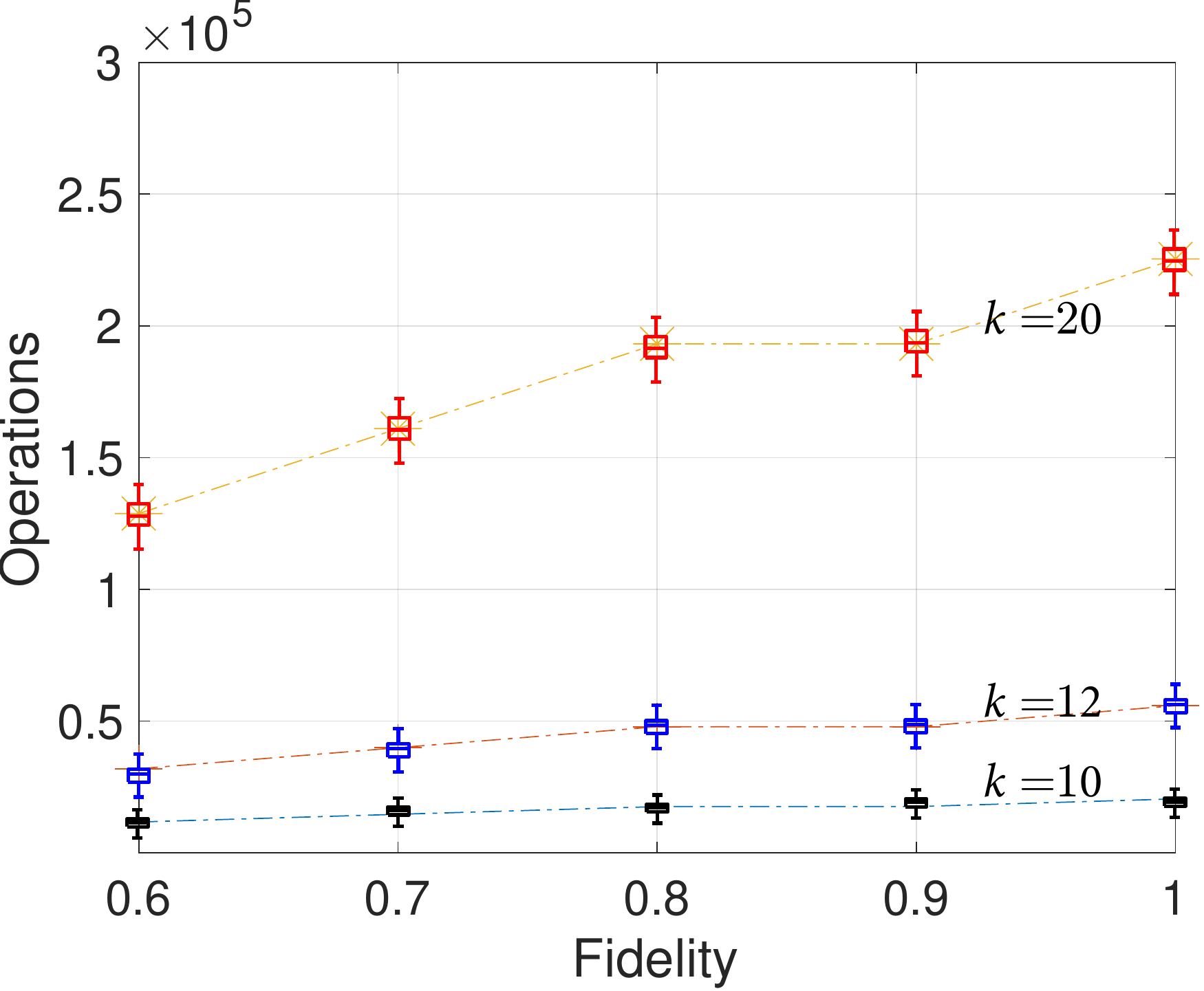}
}
\subfigure[Concatenated error correction]{
\includegraphics[width=6cm]{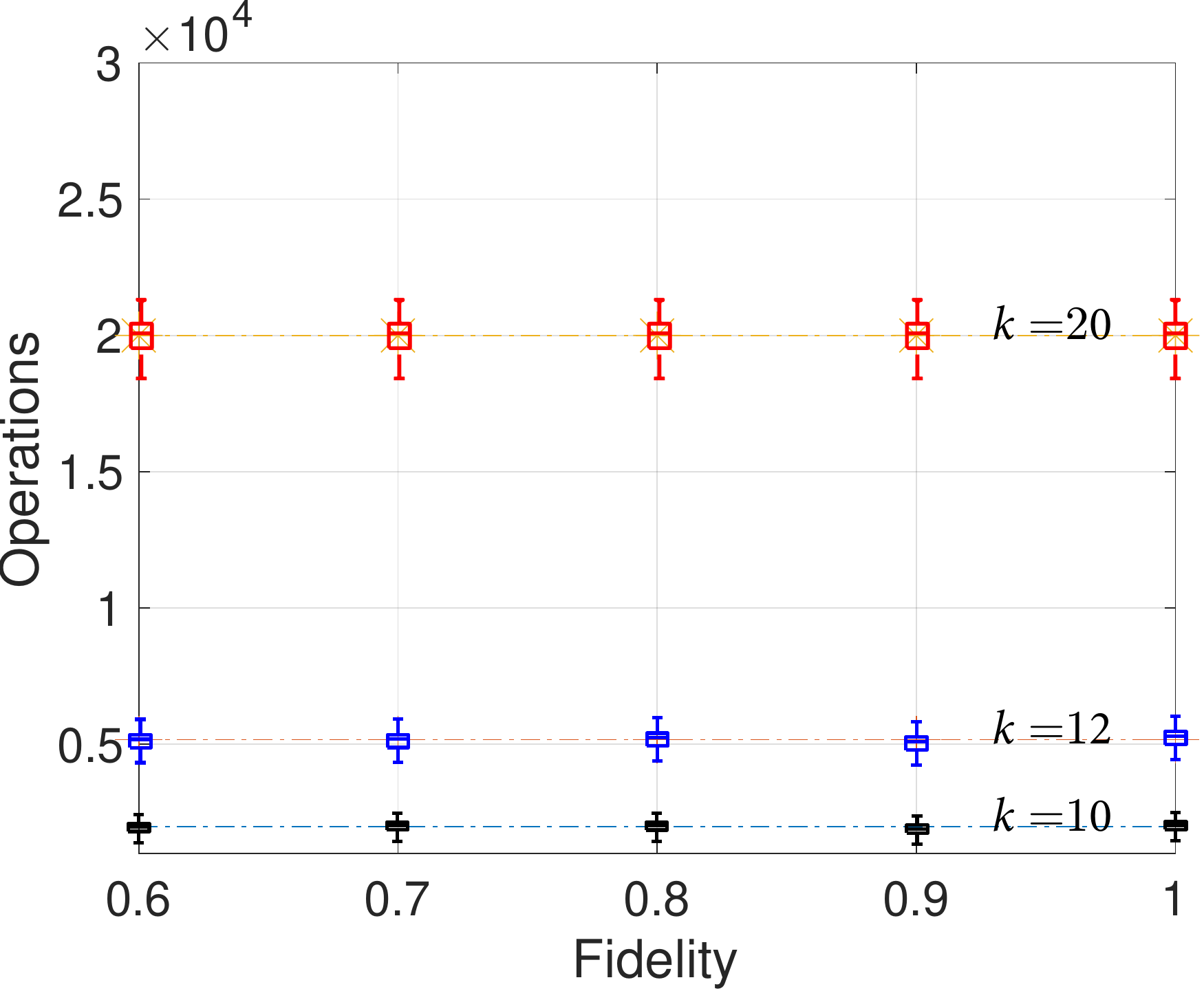}
}
\caption{Congestion setting results in the sparse grid $k$ by $k$ topology, in which $10 \leq k \leq 20$.  (a,c) Required number of physical qubits and operations to ensure entanglement swapping between terminals crossing through the most congested repeater in the network due to repeated purification. (b, d) Required number of physical qubits and operations to ensure entanglement swapping between terminals crossing through the most congested repeater in the network due to concatenated error correction.\label{fig:pythonResults3}}
\end{figure}

As observed above, every order reversal between terminals at the top row creates a path crossing. However, it is also clear that the total number of path crossings depends on the grid topology of the repeaters and on how the paths are chosen and may well exceed the quantity in the right-hand side of Inequality~\eqref{eq:cross}. 

Based on the aforementioned setting, we conduct Monte Carlo 
simulations using the \href{https://networkx.github.io}{NetworkX} python library. The simulation code is available \href{https://github.com/jgalfaro/mirrored-qbcrepgrid/}{online}. In each simulation, congestion is computed as the number of paths crossing through the most visited repeater. The random activation of terminals follows the strategy presented in Figure~\ref{default2}, using $\frac{1}{2}$ as probability $p$ for a terminal at the top of the grid to become active and communicate to some terminal at the bottom row of the grid.  The random arrangement of repeaters and terminals follows the strategy and constraints defined in Section~\ref{sec:netmodel} (i.e., terminal nodes are not adjacent to each other in 
the grid and every terminal node is adjacent to at least one repeater). Figures~\ref{fig:pythonResults2} and~\ref{fig:pythonResults3} show the simulation Results. Every Boxplot corresponds to fifty independent run executions per scenario, increasing the size of the sparse $k$ by $k$ grid, from $k=10$ to $k=20$. We plot the number of required physical qubits and the number of required operations of the most congested repeater (i.e., the one crossed by the higher number of paths in each simulation run). Consistently, we can observe that concatenated error correction, versus repeated purification, presents lower operational complexity than repeated purification to reach high fidelity, at the expense of increasing the number of required physical qubits.

\section{Conclusion}\label{sec:conc}

In a quantum networking environment, we have explored the memory resource requirements analytically and numerically to attain a certain level of fidelity. We have also investigated repeated purification and concatenated error correction in this setting. 

We have observed that concatenated error correction can achieve a given degree of fidelity with fewer iterations than repeated purification, at the cost of considerably increasing the number of required physical qubits. At the same time, we have also observed that the cost in number of operations is higher in the case of repeated purification, compared to concatenated error correction. This results in a comparable amount of resources for both approaches.

As perspectives for future work, one may want to analyze the requirements when combining both techniques simultaneously (concatenated error correction and repeated purification), to estimate the best work memory trade-off while obtaining the highest possible degree of fidelity. 

\bigskip

\noindent \textbf{Acknowledgements ---} We acknowledge the support of the Natural Sciences and Engineering Research Council of Canada (NSERC).

\bibliographystyle{unsrt_MS}

\end{document}